\RequirePackage{etex}
\documentclass[11pt]{article}
\usepackage[margin=1in]{geometry}
\usepackage{amsmath,amssymb,amsthm,mathtools}
\usepackage{graphicx}
\usepackage{tikz}
\usepackage{physics}
\usepackage{bm}
\usepackage{microtype}
\usepackage[dvipsnames]{xcolor}
\usepackage[hypertexnames=false,colorlinks=true,linkcolor=MidnightBlue,citecolor=ForestGreen,urlcolor=BrickRed]{hyperref}
\usepackage[nameinlink]{cleveref}
\usepackage{autonum}

\numberwithin{equation}{section}
\newtheorem{theorem}{Theorem}
\newtheorem{lemma}{Lemma}
\newtheorem{proposition}[lemma]{Proposition}
\newtheorem{corollary}[lemma]{Corollary}

\DeclarePairedDelimiter{\probbracket}{[}{]}
\DeclarePairedDelimiter{\expectbracket}{[}{]}

\makeatletter

\newdimen\softdelimiterlimit
\NewDocumentCommand{\softbracket}{o m}{%
  \IfValueTF{#1}
    {\mathpalette\softbracket@conditional{{#2}{#1}}}
    {\mathpalette\softbracket@plain{#2}}%
}

\newcommand{\softbracket@plain}[2]{%
  \setbox\z@=\hbox{$\m@th#1#2$}%
  \dimen@=\ht\z@
  \advance\dimen@\dp\z@
  \ifdim\dimen@>\softdelimiterlimit
    \biggl[#2\biggr]%
  \else
    \left[#2\right]%
  \fi
}

\newcommand{\softbracket@conditional}[2]{%
  \softbracket@conditional@aux#2{#1}%
}

\newcommand{\softbracket@conditional@aux}[3]{%
  \setbox\z@=\hbox{$\m@th#3#1\,\vert\,#2$}%
  \dimen@=\ht\z@
  \advance\dimen@\dp\z@
  \ifdim\dimen@>\softdelimiterlimit
    \biggl[#1\,\biggm\vert\,#2\biggr]%
  \else
    \left[#1\,\middle\vert\,#2\right]%
  \fi
}

\makeatother

\NewDocumentCommand{\prob}{s o m}{%
  \mathbb{P}%
  \IfBooleanTF{#1}
    {%
      \IfValueTF{#2}
        {\softbracket[#2]{#3}}
        {\softbracket{#3}}%
    }
    {\probbracket{#3\IfValueT{#2}{\,\vert\,#2}}}%
}

\NewDocumentCommand{\EX}{s o o m}{%
  \mathbb{E}\IfValueT{#2}{_{#2}}%
  \IfBooleanTF{#1}
    {%
      \IfValueTF{#3}
        {\softbracket[#3]{#4}}
        {\softbracket{#4}}%
    }
    {\expectbracket{#4\IfValueT{#3}{\,\vert\,#3}}}%
}

\DeclareMathOperator{\Ad}{Ad}
\newcommand{\lowT}{\raisebox{-0.5ex}{$\scriptstyle T$}}
\newcommand{\poly}{\operatorname{poly}}
\newcommand{\Tact}{T_{\mathrm{act}}}

\title{\textbf{One Gate at a Time: Complexity Growth in Random Quantum Circuits}}
\author{Zhi Li\\{\normalsize IBM Research}}
\date{}

\begin{document}
\maketitle

\begin{abstract}
    A random unitary quantum circuit is expected to be incompressible for exponentially long times.
    We show that the constant-error circuit complexity of a random unitary circuit grows almost linearly with time as $\Omega(T/\log T)$. 
    The bound holds for all $2\leq T\leq 4^n$ where $n$ is the system size, and involves no other $n$-dependence.
    This improves previous lower bounds derived from spectral gaps and unitary designs by a factor of $\poly(n)$.
    Drawing on insights from stochastic calculus, geometric functional analysis, and randomized linear algebra, our approach exploits the circuit's response to variations of individual gates and requires no control over convergence to high-order unitary designs.
  \end{abstract}

\section{Introduction}

Circuit complexity quantifies the minimum number of elementary gates needed to synthesize a unitary and is therefore a basic measure of computational resources in quantum information. 
Most unitaries on $n$ qubits have exponentially large complexity due to a counting argument \cite{BarencoEtAl1995,Knill1995}, 
and a generic time evolution is expected to enter the high-complexity region eventually after a long period of evolution \cite{RobertsYoshida2017,BrownSusskind2018,BCHKP2021,oszmaniec2024saturation}. 
A natural question is how quickly this happens. 
In chaotic many-body systems, circuit complexity is expected to grow linearly with time, continuing long after local observables have equilibrated and entanglement has saturated, before reaching its maximum only at times exponential in $n$ \cite{BrownSusskind2018}.
Random quantum circuits offer a concrete setting in which to test this expectation: does a typical random sequence of local gates admit a substantially shorter implementation?

This seemingly simple question has attracted considerable attention.
For \emph{exact} complexity, where no approximation error is allowed, Haferkamp et al. proved an $\Omega(T/\mathrm{poly}(n))$ lower bound up to exponentially long times \cite{HFKEY2022}. See also the two short proofs by Li \cite{Li2022}.

However, the exactness condition makes this result fragile to perturbations, whereas practical quantum information tasks generally allow some error.
In this regard, the question is substantially more difficult.
For a long time, people have only been able to prove a polynomial lower bound $\Omega(\poly(T)/\poly(n))$ \cite{BHH2016,BCHKP2021,Haferkamp2022} (where the $\poly(T)$ has degree less than 1, e.g., $0.2-o(1)$ in \cite{Haferkamp2022}) for constant approximation error, or an $\Omega(T/\poly(n))$ lower bound for exponentially small approximation error and a non-universal gate set \cite{Haferkamp2023}. 
A robust linear growth result was finally obtained in \cite{CHHLMT}, where the authors established an $\Omega(T/\poly(n))$ lower bound for $T\lesssim 2^{n/2}$.
The common route taken by these results is through the spectral gap of the associated random walk on the unitary group.
In particular, \cite{CHHLMT} establishes a lower bound on the $k$-th moment spectral gap that is uniform in $k$ for $k\lesssim 2^{n/2}$.
The spectral gap result is improved in \cite{BaerHaah2026}, based on which one can improve the degree of the $\poly(n)$ in the denominator and push the time range all the way to $T\sim 4^{n}$. 

These results on spectral gaps and, closely related, on unitary designs \cite{emerson2003pseudo,DankertCleveEmersonLivine2009} are powerful and have many other implications in quantum information.
For the specific purpose of studying complexity growth, however, a more direct argument may offer some conceptual and quantitative advantages.
Conceptually, complexity is expected to grow because each step explores a new direction in the vast unitary group, so collisions are rare \cite{susskind2020three}.
This intuition also underlies the proofs in \cite{HFKEY2022,Li2022}, which show that each gate (or each layer in a fixed architecture) increases the dimension of the subspace of $\mathbf U(2^n)$ being explored.
In the spectral-gap and unitary-design approach, this geometric mechanism is less explicit.
Quantitatively, the standard conversion from a spectral gap to unitary design \cite{BHH2016} may incur an additional factor of $n$ (in light of shallow but more structured unitary designs \cite{SchusterHaferkampHuang2025,laracuente2026approximate}), resulting in a corresponding loss in the complexity lower bound even when the spectral-gap estimate is optimal.
These considerations motivate us to seek a more direct route to complexity growth.

\subsection*{Main results}

This work develops a geometric explanation for studying complexity growth in random quantum circuits.
Our approach formalizes the geometric intuition that successive random gates typically explore new directions in the unitary group.
In particular, our proof does not use information about higher-level spectral gaps or unitary designs.

For concreteness, we focus on the all-to-all 2-local random unitary circuit, where each gate is drawn Haar randomly from $\mathbf{U}(4)$, acting on a random pair of qubits. 
Denote the total number of gates by $T$ and the final random unitary by $U_T$.
Our main technical result is the following \emph{setwise} small-ball\footnote{In probability theory, a natural terminology for it would be ``anti-concentration''. However, in the quantum information literature it usually has a different meaning \cite{DHB2022}, so we avoid this terminology.} estimate (\cref{sec:all-to-all-local}).

We prove that for every $T\le D$ and every nonempty finite subset $\mathcal V\subseteq\mathbf U(2^n)$ such that $\log\abs{\mathcal V}\le c_0 T$ (where $c_0>0$ is an absolute constant),
\begin{equation}\label{eq:mainsmall1}
  \prob*{\min_{V\in\mathcal V}d_2(U_T,V)\le1}
  \le \exp\left(-\frac{T}{\poly(n)}\right),
  % +\abs{\mathcal V}e^{-\Omega(T)},
\end{equation}
where $d_2$ is the normalized Hilbert-Schmidt distance.
% when $T=O(n^2)$, the first term can be improved to $e^{-\Omega(T)}$. 
Applying this estimate to an $\epsilon$-net of quantum circuits together with a more direct proof for short time complexity growth, 
we establish the following almost-linear bound in the full pre-saturation regime (\cref{sec:complexity}): for an absolute constant $c>0$, and every $2\le T\le D$,
\begin{equation}\label{eq:comp-main}
  \prob*{\mathcal C_{1/2}(U_T)\le c\frac{T}{\log T}}
  \le \exp\left(-\frac{T}{\poly(n)}\right).
\end{equation}
% For $T=O(n^2)$, the failure probability again improves to $e^{-\Omega(T)}$. 
Namely, the constant-error complexity $\mathcal C_{1/2}(U_T)$ grows as $\Omega(T/\log T)$ with no explicit dependence on $n$, all the way to times of order $4^n$, the saturation time scale.

While the main focus of this paper is all-to-all 2-local circuits, our proof strategy works more cleanly in a different random-unitary model known as random Pauli rotations \cite{HaahLiuTan2025}.
Here, each step of the random walk applies $\exp(i\theta P)$, where $P$ is a random Pauli operator and $\theta$ is a random angle.
For pedagogical reasons, we consider random Pauli rotations first (\cref{sec:rpr}), where we establish an easy version of the small-ball estimate for $T=O(2^n)$ in this model:
\begin{equation}\label{eq:mainsmall2}
  \sup_{V\in \mathbf U(N)}
  \prob*{d_2(U(\theta),V)\le 1}
  \le \exp(-\Omega(T)).
\end{equation}

Although our complexity-growth result does not use a uniform spectral gap or fast convergence to high-moment designs, the random circuits we considered do have such properties.
To demonstrate a separation, we consider Kac's random walk on $\mathbf{SO}(2^n)$ \cite{Kac1956} (\cref{sec:Kac}). This walk is known to have a spectral gap $\Theta(2^{-n})$ \cite{CarlenCarvalhoLoss2003}, so the spectral-gap method does not directly yield a growing complexity lower bound for $T=O(2^n)$. Nevertheless, we show that the complexity grows as $\Omega(T/\log(nT))$ throughout this exponentially large time range.

\subsection*{Compare with previous results}

While the primary aim of this work is conceptual, we compare our bounds with those obtained from spectral gaps and unitary designs for completeness.

For all-to-all 2-local random quantum circuits, \cite{BaerHaah2026} shows that the uniform spectral gap satisfies $\gamma=\Omega(1/n)$, which is optimal in its dependence on $n$ \cite{mittal2023local}. 
Using the standard spectral gap to complexity conversion \cite{BHH2016, BCHKP2021,CHHLMT}, we can get the following bound on complexity growth for $n \leq T\leq 4^n$ (see \cref{sec:spectral-gap-complexity} for details):
\begin{align}
  \prob*{\mathcal C_{1/2}(U_T)\le
  \frac{cT}{n\log(T)}} 
  \le \exp(-\Omega(T/n)).
\end{align}
Compared to \cref{eq:comp-main}, the complexity lower bound above contains an extra factor $n$ in the denominator.

\subsection*{Reading guide}

The proofs in \cref{sec:rpr,sec:all-to-all-local} can be read independently. 
In \cref{sec:intuition}, we explain the core mechanism underlying the proofs. We recommend reading this subsection before the full proofs. 
Note, however, that \cref{sec:intuition} is intended to explain the underlying mechanism of the proofs and how they were naturally discovered. The full proofs do not follow every step outlined there exactly.

\Cref{sec:rpr} presents the proof strategy in a streamlined setting, and some aspects of the theorems (such as the time range) are intentionally not optimized. 
Likewise, \cref{sec:Kac} is intended to clarify the separation rather than to obtain optimal bounds. 
Most numerical constants throughout the paper are not optimized.

\section{Technical overview}\label{sec:overview}

\subsection{Notation}\label{sec:notation}
Throughout this paper, we consider quantum systems with $n$ qubits.
Write $N=2^n$.
We write $\mathbf U(N)$ for the unitary group on $(\mathbb C^2)^{\otimes n}$. 
We view $\operatorname{Mat}(N,\mathbb C)$ as a Hilbert space by defining:
\begin{equation}
  \braket{A}{B}=\tau(A^\dagger B)=\frac1N\tr A^\dagger B,
  \qquad
  \norm{A}_2 = \braket{A}{A}^{1/2}.
\end{equation}
(Throughout this paper, Dirac notation is used for elements of $\operatorname{Mat}(N,\mathbb C)$.)
The group $\mathbf U(N)$ acts unitarily on $\operatorname{Mat}(N,\mathbb C)$ by conjugation:
\begin{equation}
  \Ad_g(A) = gAg^{\dagger},\qquad\text{where~}
  g\in \mathbf U(N), ~
  A\in \operatorname{Mat}(N,\mathbb C).
\end{equation}
For $U, V \in \mathbf U(N)$, define the normalized Hilbert-Schmidt distance:
\begin{equation}
  d_2(U,V)=\norm{U-V}_2.
\end{equation}
Note that 
\begin{equation}
    d_2(U,V)^2=2-2\Re\tau(V^\dagger U).
\end{equation}

We define 
\begin{equation}
  \mathfrak g_n=\{A \mid A\in \operatorname{Mat}(N,\mathbb C), ~A=A^\dagger, ~\tr A=0\},
  \qquad
  \dim_{\mathbb R} \mathfrak g_n =D :=N^2-1.
\end{equation}
Equipped with the inner product inherited from $\operatorname{Mat}(N,\mathbb C)$, $\mathfrak g_n$ is a real Hilbert space.
We will occasionally consider its complexification without causing confusion, naturally identified with the complex Hilbert subspace of traceless matrices in $\operatorname{Mat}(N,\mathbb C)$.

We use $\Tr$ to denote the trace of a linear operator on $\operatorname{Mat}(N,\mathbb C)$, $\mathfrak g_n$, or its complexification; in contrast, $\tr$ denotes the trace on $(\mathbb C^2)^{\otimes n}$.

\paragraph{Circuit complexity}
For $U\in\mathbf U(N)$, let $\mathcal C_\epsilon(U)$ be the least number of arbitrary two-qubit gates, acting on any pairs of qubits, needed to approximate $U$ within operator-norm error $\epsilon$.
Note that
\begin{equation}
      d_2(U,V) \le \norm{U-V}_{\mathrm{op}}.
\end{equation}

\paragraph{Random Pauli rotations} Let
\begin{equation}
  \mathfrak p_n^\circ
  =\{I,X,Y,Z\}^{\otimes n}\setminus\{I^{\otimes n}\}, \qquad \abs{\mathfrak p_n^\circ}=D,
\end{equation}
be the set of nonidentity Pauli operators.
The nonidentity Pauli operators form an orthonormal basis of $\mathfrak g_n$,
\begin{equation}\label{eq:orthonormalP}
  \frac{1}{D}\sum_{P\in\mathfrak p_n^\circ}\ketbra{P}
  = \frac{1}{D}\mathsf I_{\mathfrak g_n}.
\end{equation}

We consider a random process of $T$ steps.
For $j=1,\ldots,T$, choose $P_j$ uniformly from $\mathfrak p_n^\circ$ and $\theta_j$ uniformly from $\mathbb T=\mathbb R/(2\pi\mathbb Z)$, independently. 
The corresponding one-step random Pauli rotation is $e^{i\theta_jP_j}$, and the product of $T$ such rotations is
\begin{equation}\label{eq:Utheta}
  U(\theta) = e^{i\theta_TP_T}\cdots e^{i\theta_1P_1},
  \qquad \theta=(\theta_1,\ldots,\theta_T).
\end{equation}

\paragraph{All-to-all 2-local random unitaries}
For $j=1,\ldots,T$, choose an unordered pair $e_j\in\binom{[n]}2$ uniformly and, conditional on $e_j$, choose $G_j$ Haar-randomly from the embedded subgroup $\mathbf U(4)_{e_j}$ acting on the two qubits in $e_j$. The product of $T$ such gates is
\begin{equation}
  U_T=G_T\cdots G_1.
\end{equation}

\subsection{Core intuition}\label{sec:intuition}

The main results of this work are the small-ball estimates \cref{eq:mainsmall1,eq:mainsmall2}.
The results on complexity growth follow directly by an $\epsilon$-net argument.
In this subsection, we provide a heuristic explanation of the small-ball estimate.

\paragraph{Random Pauli rotations} 

While the model is defined by sequential application of a time-independent random walk, we can regard the final distribution as the pushforward distribution of the Haar distribution on the torus $\mathbb{T}^T$ through the function $U(\theta)$. 

We compute the partial derivative $\pdv{U}{\theta_j}$ by pushing $e^{i\theta_j P_j}$ to the rightmost:
\begin{equation}\label{eq:partiald}
  \partial_j U(\theta) := \pdv{U}{\theta_j} = i U(\theta) X_j(\theta),
\end{equation}
where
\begin{equation}\label{eq:adjX}
  \qquad X_j(\theta) = U_{<j}(\theta)^\dagger P_j U_{<j}(\theta) = \Ad_{U_{<j}(\theta)^\dagger}(P_j),
  \qquad U_{<j}(\theta) = e^{i\theta_{j-1}P_{j-1}}\cdots e^{i\theta_1P_1}.
\end{equation}
Note that $X_j(\theta)$ is independent of $\theta_j$, so we further have:
\begin{equation}
    % \pdv[2]{U(\theta)}{\theta_j} = -U(\theta),
    \partial_j^2 U(\theta) = -U(\theta).
\end{equation}
Let us consider the overlap of $U(\theta)$ with an arbitrarily fixed unitary $V$:
\begin{equation}\label{eq:deff}
  f(\theta)=\Re\tau\left(V^\dagger U(\theta)\right) = 1-\frac12 d_2(U(\theta),V)^2.
\end{equation}

Now assume $\theta$ is subjected to the standard $T$-dimensional Brownian motion on $\mathbb{T}^T$. Itô's formula \cite{ito1951stochastic,karatzas1991brownian} then gives:
\begin{equation}\label{eq:defdynamics}
  \dd f = -\frac{T}{2} f \dd t -\Im\tau\left(V^\dagger U(\theta)X_j(\theta)\right)\cdot \dd B_t^{(j)}.
\end{equation}

This resembles the one-dimensional Ornstein–Uhlenbeck process \cite{UhlenbeckOrnstein1930}:
\begin{equation}\label{eq:OU}
\dd f_t = -\kappa f_t \dd t + \sigma \dd B_t.
\end{equation}
Here $\kappa$ is the mean-reversion rate, and $\sigma$ represents the noise volatility. 
The width of the stationary distribution is determined by the relative strength between the noise and the mean-reversion rate: it is well known that \cref{eq:OU} has the stationary Gaussian distribution $\mathcal{N}(0,\frac{\sigma^2}{2\kappa})$.
By analogy (the noise term in \cref{eq:defdynamics} is not a standard Brownian motion because it depends on $\theta$), our dynamics \cref{eq:defdynamics} has a mean-reversion rate $\kappa\sim T$, and
\begin{equation}
  \sigma^2\lesssim  \sum_j \abs{\tau(X_j(\theta) V^\dagger U(\theta))}^2 
  = \bra{V^\dagger U(\theta)}\Big(\sum_j\ketbra{X_j(\theta)}\Big)\ket{V^\dagger U(\theta)}.
\end{equation} 

The operator in the middle is called the frame operator for vectors $\ket{X_j}$.
To bound $\sigma$, one would ideally like the frame operator to satisfy:
\begin{equation}\label{eq:KT1}
  \sum_j\ketbra{X_j(\theta)} \overset{?}{\preceq} B \mathsf I_{\mathfrak g_n},
\end{equation}   
with an absolute constant $B=O(1)$.
Intuitively, it says that the vectors $\ket{X_j}$ cannot concentrate too strongly along any single direction; for instance, one can take $B=1$ when they are orthonormal.
Such a bound is generally too strong. Instead, using the randomness of $\{P_j\}$, in particular \cref{eq:orthonormalP}, we will select sufficiently many (a constant portion of $T$) of these vectors whose frame operator obeys the required uniform bound.

Under such a condition, the volatility $\sigma$ will be $O(1)$.
It is then natural to expect that the stationary distribution of our $f$ in \cref{eq:deff} is peaked at 0, with a width $O(1/\sqrt T)$, which would imply the desired small-ball estimate:
\begin{equation}\label{eq:anti1}
  \prob*{d_2(U(\theta),V)\leq 1} = \prob*{f(\theta)> \frac12} \leq \exp(-\Omega(T)).
\end{equation}

\paragraph{All-to-all 2-local random unitaries}

Most of the preceding argument also applies to all-to-all 2-local random unitary circuits.
The main difficulty is that the tangent directions generated by 2-local gates are less random than those generated by random Pauli operators.
More precisely, for random Pauli operators, \cref{eq:orthonormalP} gives, at each step, an average operator with norm $1/D$.
By contrast, a 2-local gate provides only $\Theta(n^2)$ possible directions at each step, whose average has operator norm $\Theta(1/n^2)$.

Fortunately, the actual tangent direction also involves the adjoint action of a random unitary; see \cref{eq:adjX}.
We will essentially show that further averaging over these adjoint actions reduces the problematic $\Theta(1/n^2)$ scale to $\Theta(1/D)$, allowing the argument to proceed.

\subsection{Connection with other fields}

The Ornstein--Uhlenbeck process motivates the balance between mean reversion and noise in our argument.
The proof itself, however, uses integration by parts rather than explicit stochastic processes.
This is largely a matter of bookkeeping: one could instead follow the stochastic-process intuition more directly using martingale concentration inequalities.
In fact, the differential operator used in the proof is the formal generator of the underlying auxiliary stochastic process.
Using generator identities to study invariant distributions, approximate probability laws, and derive concentration inequalities is a standard approach \cite{ledoux2001concentration, BakryGentilLedoux2014}, appearing, for example, in Stein's method \cite{Ross2011,Chatterjee2007}.

The selection of vectors is a recurring topic in geometric functional analysis.
Selecting a large subset for which a bound of the form \cref{eq:KT1} holds is a Kashin--Tzafriri type problem \cite{KashinTzafriri1993}, and the construction in \cref{sec:greedy} is inspired by the proof of the Dvoretzky--Rogers lemma \cite{DvoretzkyRogers1950}.
For our purpose, we need to ensure that our selector is adaptive, which connects our argument with online methods in randomized linear algebra.
The exponential-moment estimate used to analyze the first-stage selector in \cref{sec:all-to-all-local} is related to the approach to matrix concentration for expander walks \cite{GLSS2018}.
The analysis of the second-stage selector is closely related to that of online row sampling \cite{CohenMuscoPachocki2020}.

\section{Random Pauli rotations}\label{sec:rpr}

In this section, we prove a small-ball estimate for random Pauli rotations. 
We fix $T$ and let $U(\theta)$ be the result after $T$ steps of the random Pauli rotation process.

\begin{theorem}\label{cor:rpr-anticoncentration}
For all $1\le T\le 2^{n-4}$,
\begin{equation}
  \sup_{V\in \mathbf U(N)}
  \prob*{d_2(U(\theta),V)\le 1}
  \le \exp(-\Omega(T)).
\end{equation}
\end{theorem}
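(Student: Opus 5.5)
Fix $V$ and set $f(\theta)=\Re\tau(V^\dagger U(\theta))$, so the event is $\{f>1/2\}$. The plan is to turn the Ornstein--Uhlenbeck heuristic into an exponential-moment bound by integration by parts on $\mathbb T^T$, but only over a random subset $S\subseteq[T]$ of coordinates whose tangent vectors satisfy a frame bound. We condition on $(P_j)_j$ and on $\theta_{[T]\setminus S}$, and average only over $\theta_S$. For $j\in S$ we have $\partial_j^2 U=-U$ by \cref{eq:partiald}, so $\Delta_S f=-|S|f$, where $\Delta_S=\sum_{j\in S}\partial_j^2$. Also $\partial_j f=-\Im\tau(V^\dagger U X_j)$.

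\textbf{Main inequality.} Write $W=V^\dagger U$. The frame estimate we want is
\[
\sum_{j\in S}\lvert\partial_j f\rvert^2\le\sum_{j\in S}\lvert\braket{X_j}{W}\rvert^2\le B\norm{W}_2^2=B.
\]
We only need it on the traceless part, since the $\tau(W)$ component is orthogonal to $X_j$. For $\lambda>0$, integrate by parts:
\[
\lambda|S|\,\mathbb E[f e^{\lambda f}]=-\lambda\,\mathbb E[(\Delta_S f)e^{\lambda f}]=\lambda^2\,\mathbb E\Big[\sum_{j\in S}(\partial_j f)^2e^{\lambda f}\Big]\le \lambda^2B\,\mathbb E[e^{\lambda f}].
\]
With $m(\lambda)=\log\mathbb E e^{\lambda f}$ this reads $m'(\lambda)\le\lambda B/|S|$, so $m(\lambda)\le\lambda^2B/(2|S|)$ (note $m(0)=0$). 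This gives $\Pr[f>1/2]\le\exp(-|S|/(8B))$.

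There is a subtlety: the $X_j$ depend on $\theta_{<j}$, including coordinates in $S$. So the frame bound must hold for every $\theta_S$, or at least be enforced in a way compatible with the integration. I would make the selection depend only on the $P_j$, so that it is adapted, and obtain the bound uniformly in $\theta$.

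\textbf{Selection.} This is the main obstacle. The $X_j=\Ad_{U_{<j}^\dagger}P_j$ are unit vectors in $\mathfrak g_n$, but their frame operator can be large if many of them align. Use the randomness of $P_j$, which by \cref{eq:orthonormalP} is isotropic with mean $\frac1D\mathsf I$, in an online greedy selector. Maintain $F_{j-1}=\sum_{i\in S,\,i<j}\ketbra{X_i}$ and accept $j$ only if $\bra{X_j}(2\mathsf I-F_{j-1})^{-1}\ket{X_j}$, or a similar barrier potential, stays below a threshold. Accepting keeps $F\preceq B\mathsf I$ deterministically. The acceptance decision depends on $P_j$ and on $\theta_{<j}$. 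Since $\Ad$ is unitary, conditional on the past $X_j$ is uniform over the orthonormal basis $\Ad_{U_{<j}^\dagger}\mathfrak p_n^\circ$, so $\mathbb E\bra{X_j}M\ket{X_j}=\Tr M/D$. While $|S|\le T\le D/16$, the trace of the barrier matrix stays $O(D)$, so each step is accepted with probability at least $1/2$. An Azuma/Chernoff bound then gives $|S|\ge T/4$ except with probability $e^{-\Omega(T)}$. This is where $T\le 2^{n-4}$ enters; in fact $T\le D/16$ suffices.

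\textbf{Handling $\theta$-dependence.} Selection depends on $\theta_{<j}$, while integration by parts treats $\theta_S$ as free. To deal with this, rerun the argument with the generator $\mathcal L=\sum_j\chi_j(\theta)\partial_j^2$, where $\chi_j$ is the acceptance indicator, which depends only on $\theta_{<j}$ and $P_j$. Since $\chi_j$ does not depend on $\theta_j$, we still have $\mathbb E[\chi_j\,\partial_j^2 g]=-\mathbb E[\chi_j(\partial_j g)^2\cdots]$ for each $j$. The deterministic frame bound on the accepted set holds pointwise in $\theta$. Replace $|S|$ by a lower bound through the event $\{|S|\ge T/4\}$, e.g.\ by applying the inequality to $e^{\lambda f}\mathbf 1\{|S|\ge T/4\}$ or by working with $\sum_j\chi_j f$ directly. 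Combining with the selection failure probability $e^{-\Omega(T)}$ yields the theorem uniformly in $V$.
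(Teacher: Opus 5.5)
Your overall architecture is the paper's: integration by parts for $\sum_j\chi_j\partial_j^2$ with $\chi_j$ independent of $\theta_j$, a pointwise frame bound, and the isotropy $\mathbb E[\ketbra{X_j}\mid\mathcal F_{j-1}]=\mathsf I/D$. The selection step, which you rightly call the crux, is not established, however.

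Your barrier rule ``accept iff $q_j=\expval{(2\mathsf I-F_{j-1})^{-1}}{X_j}\le c$'' with $c<1$ does keep $F\prec2\mathsf I$. But the claim that $\Tr(2\mathsf I-F_{j-1})^{-1}$ ``stays $O(D)$ while $|S|\le D/16$'' does not follow from $\operatorname{rank}F\le T$, because the rule does not keep the spectrum of $F$ away from $2$. Consider a vector that is otherwise fresh but has a component $\epsilon$ along an eigendirection where $2-\mu$ is small. It passes whenever $\epsilon^2\lesssim(c-\tfrac12)(2-\mu)$, and each such acceptance can shrink $2-\mu$ by a constant factor. So any bound on this trace would need a probabilistic argument you have not given, and without it you have no lower bound on the conditional acceptance probability on every history, which Azuma/Chernoff needs.

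Nor would an $O(D)$ bound suffice. Since $(2\mathsf I-F)^{-1}\succeq\tfrac12\mathsf I$, you always have $\mathbb E[q_j\mid\mathcal F_{j-1}]\ge\tfrac12$. Markov with $c<1$ therefore can never give acceptance probability $\ge 1/2$. To get a useful constant, you need $\Tr\bigl[(2\mathsf I-F)^{-1}-\tfrac12\mathsf I\bigr]$ to be a small fraction of $D$, which is exactly the missing control of eigenvalues near $2$.

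\cref{prop:selector} sidesteps this with a test whose conditional mean does not depend on the history. It accepts iff $\norm{\operatorname{Proj}_{\mathcal W_{j-1}}X_j}_2^2\le 1/(32T)$, where $\mathcal W_{j-1}$ is the span of earlier accepted vectors. The conditional mean is $\dim\mathcal W_{j-1}/D\le T/D$ on every path, so each rejection has conditional probability at most $32T^2/D$. The accepted vectors stay within $1/4$ (in operator norm, as a column matrix) of an orthonormal family, which gives $\mathsf S\preceq2\mathsf I$. The union bound $2^T(32T^2/D)^{T/2}$ is exactly where $T\le2^{n-4}$ enters. Your claimed range $T\le D/16$ would need a genuine potential argument, for example the moving barrier $16(\mathsf I+\mathsf H)-\mathsf S$ with $\mathsf H=(j/D)\mathsf I$ and the log-determinant compensator of \cref{lem:local-global-selection}.

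Second, you have not carried out the passage from the identity to a tail bound with the random count $K=\sum_j\chi_j$, and your first suggestion fails. Since $\chi_i$ for $i>j$ depends on $\theta_j$, so does $K$. Hence $\mathbf 1\{K\ge T/4\}$ is not annihilated by $\partial_j$ and cannot be inserted into the integration by parts. Your second suggestion, working directly with $\mathbb E[Kfe^{\lambda f}]\le\lambda B\,\mathbb E[e^{\lambda f}]$, also stalls: you cannot replace $K$ by $T/4$ where $f<0$, since the inequality reverses, nor on $\{K<T/4\}$.

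\cref{lem:anticoncentration} instead tests against $\phi(f)=\mathbf 1\{f\ge t\}$ with $t\ge0$. Then $Kf\phi(f)\ge0$ everywhere and $Kf\phi(f)\ge\tfrac{Tt}{2}\phi(f)$ on $\{K\ge T/2\}$. With $H(t)=\mathbb P[f\ge t]$, this gives $\tfrac{d}{dt}\bigl(H-\mathbb P[K<T/2]\bigr)\le-\tfrac{Tt}{2B}\bigl(H-\mathbb P[K<T/2]\bigr)$, and Gr\"onwall finishes. Your Herbst-type version can be repaired the same way: test against a function supported on $\{f\ge0\}$, and take $\lambda$ a small enough multiple of $T$ that the error $e^{\lambda}\,\mathbb P[K<T/4]$ is absorbed. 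That step still has to be supplied.
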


We prove this theorem by formalizing the intuition outlined in \cref{sec:intuition}.
The first lemma establishes a small-ball estimate under \cref{eq:KT2}, the appropriate replacement for \cref{eq:KT1}.
The second lemma shows that, with high probability, we can select at least $\Theta(T)$ directions while ensuring that \cref{eq:KT2} holds.

\subsection{Small-ball estimates via bounded frame operators}

Let us fix Pauli operators $P_1,\ldots,P_T$ for now. In this subsection only, we use $\prob{\cdot}$ and $\EX{\cdot}$ for the standard Haar probability and expectation on $\mathbb T^T$.
For $j\in[T]$, let $a_j:\mathbb T^T\to\{0,1\}$ be a function independent of $\theta_j$. Define
\begin{align}
  \mathsf S(\theta)=\sum_{j=1}^T a_j(\theta)\ketbra{X_j(\theta)},
\end{align}
where $X_j$ is as defined in \cref{eq:adjX}.
Also define
\begin{equation}
  K(\theta)= \Tr \mathsf S(\theta) = \sum_{j=1}^T a_j(\theta) \in \mathbb N.
\end{equation}

\begin{lemma}\label{lem:anticoncentration}
Suppose that, for some $B>0$,
\begin{equation}\label{eq:KT2}
  \mathsf S(\theta)\preceq B \mathsf I_{\mathfrak g_n},
  \qquad \forall\theta\in\mathbb T^T.
\end{equation}
Also suppose that $\partial_j a_j=0$ $(\forall j\in[T])$.
Then, for every unitary $V\in \mathbf U(N)$,
\begin{equation}
  \prob{d_2(U(\theta),V)\le1}
  \le
  \prob{K(\theta)<\frac{T}{2}}
  +\exp(-\frac{T}{16B}).
\end{equation}
\end{lemma}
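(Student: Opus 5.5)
The plan is a Herbst-type exponential-moment argument. It uses the generator of Brownian motion on the selected coordinates in place of an explicit stochastic process. Fix $V$, let $f(\theta)=\Re\tau(V^\dagger U(\theta))$ as in \cref{eq:deff}, and note that $d_2(U,V)\le 1$ iff $f\ge \tfrac12$. Define $\mathcal L=\sum_j a_j\partial_j^2$.

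\textbf{Step 1 (generator identity).} Since $X_j$ does not depend on $\theta_j$, we have $\partial_j^2U=-U$. Because $a_j$ does not depend on $\theta_j$ either, this gives $\mathcal L f=-Kf$.

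\textbf{Step 2 (integration by parts).} For each $j$, integration by parts in $\theta_j$ on $\mathbb T$ gives $\EX{a_j g\,\partial_j^2 h}=-\EX{a_j\,\partial_j g\,\partial_j h}$. Here $\partial_j a_j=0$ is used again.

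\textbf{Step 3 (the identity for $e^{\lambda f}$).} Taking $g=e^{\lambda f}$ and $h=f$ yields
\begin{equation}
  \EX{K f e^{\lambda f}}=\lambda\,\EX{e^{\lambda f}\,\Gamma(f)},\qquad \Gamma(f)=\sum_j a_j(\partial_j f)^2 .
\end{equation}

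\textbf{Step 4 (bounding the carré du champ with \cref{eq:KT2}).} By \cref{eq:partiald}, $\partial_j f=-\Im\tau(V^\dagger U X_j)$, so $(\partial_jf)^2\le\abs{\braket{X_j}{W}}^2$. Here $W$ is the traceless part of $V^\dagger U$, and we use that $X_j$ is traceless Hermitian. The frame bound \cref{eq:KT2} extends to the complexification, so $\Gamma(f)\le\bra{W}\mathsf S\ket{W}\le B\norm{W}_2^2\le B$, using $\norm{V^\dagger U}_2=1$. Therefore
\begin{equation}
  \EX{K f e^{\lambda f}}\le\lambda B\, M(\lambda),\qquad M(\lambda)=\EX{e^{\lambda f}} .
\end{equation}

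\textbf{Step 5 (Gronwall and Markov).} If $K$ were the constant $T/2$, then $M'=\EX{fe^{\lambda f}}\le (2\lambda B/T)M$ would integrate to $M(\lambda)\le e^{\lambda^2B/T}$. Markov's inequality at $\lambda=T/(4B)$ would then give
\begin{equation}
  \prob{f\ge\tfrac12}\le e^{\lambda^2B/T-\lambda/2}=e^{-T/(16B)},
\end{equation}
which is exactly the constant in the statement. This confirms that Herbst's argument is the intended route.

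\textbf{Main obstacle: $K$ is random.} The remaining work is to absorb the randomness of $K$ into the additive term $\prob{K<T/2}$. My first attempt would be to split $Kfe^{\lambda f}$ according to the sign of $f$ and the event $\{K\ge T/2\}$:
\begin{itemize}
  \item On $\{f\ge0,\,K\ge T/2\}$ we have $Kfe^{\lambda f}\ge\frac T2 fe^{\lambda f}$.
  \item On $\{f<0\}$ we have $Kfe^{\lambda f}\ge -T\sup_{x<0}\abs{x}e^{\lambda x}=-T/(e\lambda)$, which is a harmless error.
\end{itemize}
This still leaves the contribution of $\{K<T/2,\ f>0\}$, where we can lose up to a factor $e^{\lambda}\prob{K<T/2}$ in the differential inequality. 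Such a loss would not match the clean statement.

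To avoid it, I would stop running Herbst on $M(\lambda)$ itself and work with a moment weighted toward $\{K\ge T/2\}$. One option is to enlarge the selection: on $\{K<T/2\}$, set all $a_j=0$ and use the trivial bound. Another option is to replace $f$ by a truncated test function such as $\phi(f)$ with $\phi(x)=e^{\lambda x}$ for $x\ge0$ and $\phi$ constant for $x<0$, and to apply the identity only to the product $\phi(f)\,\mathbf 1_{\{K\ge T/2\}}$ where legitimate. In the worst case I would accept a slightly different splitting of the form
\begin{equation}
  \prob{f\ge\tfrac12}\le\prob{K<T/2}+\prob{f\ge\tfrac12,\ K\ge T/2},
\end{equation}
and bound the second term by the modified Gronwall inequality, checking that the error terms only change lower-order constants. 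I expect this bookkeeping, rather than the frame-operator estimate in Step 4, to be the delicate part of the proof.
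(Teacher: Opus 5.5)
\textbf{There is a genuine gap in the final step.} Your Steps 1--4 are correct. Step 4 is even slightly more careful than the paper about passing to the traceless part of $V^\dagger U$. Step 5 also correctly recovers the constant $1/16$ when $K$ is deterministic. But the step you flag as the main obstacle is left open, and none of your suggested fixes works:
\begin{itemize}
\item Redefining $a_j'=a_j\mathbf 1_{\{K\ge T/2\}}$ destroys $\partial_j a_j'=0$. The reason is that $K$ depends on $\theta_j$ through the later selections $a_i$, $i>j$, which depend on $X_i$ and $\mathcal W_{i-1}$.
\item For the same reason you cannot integrate by parts against $\phi(f)\mathbf 1_{\{K\ge T/2\}}$: the derivative $\partial_j$ of the indicator is an uncontrolled singular term.
\item The splitting $\mathbb P[f\ge\tfrac12]\le\mathbb P[K<T/2]+\mathbb P[f\ge\tfrac12,K\ge T/2]$ only restates the problem. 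The generator identity cannot be localized to the good event, so any Herbst inequality for $\mathbb E[e^{\lambda f}]$ picks up the bad event with weight up to $e^{\lambda}\,\mathbb P[K<T/2]$. At $\lambda=T/(4B)$, after Gronwall and Markov, this becomes roughly $e^{T/(8B)}\,\mathbb P[K<T/2]$, not the clean additive term in the statement.
\item The $\{f<0\}$ correction you noted also spoils the exact constant.
\end{itemize}

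\textbf{The missing idea} is to run Gronwall in the level variable $t$ on the tail function $H(t)=\mathbb P[f\ge t]$, rather than in $\lambda$ on the moment generating function. Take the test function $\phi=\mathbf 1_{[t,\infty)}$ (smoothed) in your integration-by-parts identity $\mathbb E[Kf\phi(f)]=\mathbb E[\phi'(f)\Gamma(f)]$. This gives
$\mathbb E[Kf\mathbf 1_{\{f\ge t\}}]=\mathbb E[\delta(f-t)\Gamma(f)]\le B\,\mathbb E[\delta(f-t)]=-BH'(t)$.
For $t\ge0$ the integrand $Kf\mathbf 1_{\{f\ge t\}}$ is pointwise nonnegative, so the bad event can simply be discarded:
$\mathbb E[Kf\mathbf 1_{\{f\ge t\}}]\ge\tfrac{Tt}{2}\,\mathbb P[f\ge t,\ K\ge T/2]\ge\tfrac{Tt}{2}\bigl(H(t)-\mathbb P[K<T/2]\bigr)$.

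Hence $y(t)=H(t)-\mathbb P[K<T/2]$ satisfies $y'\le-\tfrac{Tt}{2B}\,y$. Gronwall then gives $y(t)\le e^{-Tt^2/(4B)}\,y(0)\le e^{-Tt^2/(4B)}$, which holds even if $y$ becomes negative since $y(0)\le1$. Setting $t=\tfrac12$ yields exactly the lemma.

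This works because the bad event contributes at most $\mathbb P[K<T/2]$ to $H(t)$ uniformly in $t$, so nothing is amplified. Restricting to $t\ge0$ also removes your $\{f<0\}$ issue. The resulting sub-Gaussian exponent $Tt^2/(4B)$ matches the one your deterministic-$K$ Herbst computation produced.
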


\begin{proof}
  Fix the unitary $V$. Set $f(\theta)=\Re\tau(U(\theta)V^\dagger)$.
  A straightforward calculation shows
\begin{equation}\label{eq:fupdate}
  f(\theta+h \mathbf e_j)
  =\cos h\,f(\theta)+\sin h\,W_j(\theta),
\end{equation}
where $W_j(\theta)=-\Im \tau\qty(U(\theta)X_j(\theta) V^\dagger )$, and $\mathbf e_j$ is the unit vector in the $j$-th direction of $\mathbb{T}^T$.
The assumption \cref{eq:KT2} then implies that
\begin{equation}\label{eq:boundW}
  \begin{aligned}
    \sum_{j} a_j W_j(\theta)^2
    \leq \sum_{j:\,a_j(\theta)=1}
      \abs{\tau(X_j(\theta)V^\dagger U(\theta))}^2
    \leq B\norm{V^\dagger U(\theta)}_2^2
      =B.
  \end{aligned}
\end{equation}

It follows from \cref{eq:fupdate} that:
\begin{equation}
  \pdv{f(\theta)}{\theta_j} = W_j(\theta),\qquad \pdv[2]{f(\theta)}{\theta_j}=-f(\theta).
\end{equation}
Therefore, for any function $\phi:\mathbb R\to \mathbb R$, we have:
\begin{equation}
  \EX{a_j(\phi'(f)W_j^2-\phi(f)f)}
  =\EX{\pdv{}{\theta_j}\bigl(a_j\phi(f)W_j\bigr)}=0.
\end{equation}

Summing over $j$, we get:
\begin{equation}\label{eq:bypart}
  \EX{K\phi(f)f} = \EX{\phi'(f)\sum_j a_j W_j^2}.
\end{equation}

Recall that our goal is to understand $\prob{f\geq t}=\EX{\mathbf1_{\{f\geq t\}}}$.
We denote it by $H(t)$.
We (formally\footnote{One may use a family of smooth approximations for mathematical rigor.}) take $\phi(x)=\mathbf 1_{\{x\ge t\}}$ where $t\geq 0$; then \cref{eq:bypart} implies
\begin{equation}
\EX{Kf\,\mathbf1_{\{f\geq t\}}} = \EX{\delta(f-t)\sum_j a_j W_j^2}\leq B\EX{\delta(f-t)}=-BH'(t).
\end{equation}
On the other hand, by restricting to the complement of $E_{\mathrm{bad}}=\{\theta:K(\theta)<T/2\}$, we get
\begin{equation}
  \EX{Kf\,\mathbf1_{\{f\geq t\}}} 
  \geq \frac{Tt}{2} \prob*{\{f\geq t, K(\theta)\geq \frac{T}2\}}
  \geq \frac{Tt}{2} (H(t)-\prob{E_{\text{bad}}}).
\end{equation}
Therefore,
\begin{equation}
  \frac{\dd}{\dd t}(H(t)-\prob{E_{\text{bad}}})
  \leq  -\frac{Tt}{2B} (H(t)-\prob{E_{\text{bad}}})
\end{equation}
Grönwall's inequality then implies that (the r.h.s. below could be negative):
\begin{equation}
  H(t)-\prob{E_{\text{bad}}} \leq \exp(-\frac{Tt^2}{4B})(H(0)-\prob{E_{\text{bad}}}).
\end{equation}
Picking $t=1/2$ and noting that $H(0)\leq 1$, we find
\begin{equation}
  \prob*{f\ge \frac12} \leq \prob{E_{\text{bad}}} + \exp(-\frac{T}{16B}).
\end{equation}
\end{proof}

\subsection{Greedy selection}\label{sec:greedy}

In the above lemma, the Pauli operators $P_1,\cdots P_T$ were fixed, and a small-ball estimate was shown assuming the existence of $\{a_j\}$ such that \cref{eq:KT2} holds.
% In the following, we show that with high probability over the random choices of Pauli operators, such $\{a_j\}$ exists.
In the following, we construct such $\{a_j\}$ and show that, with high probability (over random Pauli operators), at least $T/2$ directions are selected.

Given $P_1,\ldots,P_T$ and $\theta\in\mathbb T^T$, we use the following greedy algorithm to define $a_j(\theta)$ (equivalently, to select a subset of $\{X_j\mid j\in[T]\}$).
We start from a subspace $\mathcal W_0=\{0\}$.
For $j=1,\ldots,T$, we iteratively define
\begin{equation}
  \begin{aligned}
      a_j
  &=\mathbf 1\qty{
    \norm{\operatorname{Proj}_{\mathcal W_{j-1}}X_j(\theta)}_2^2
    \le\frac1{32T}  },\\
  \qquad
  \mathcal W_j
  &=\mathcal W_{j-1}+\operatorname{span}\{a_j X_j(\theta)\}.
  \end{aligned}
\end{equation}
Namely, we assign $a_j=1$ and add $X_j$ to the selected set if $X_j$ is almost orthogonal to the span of previously selected vectors.

It is clear that each $a_j$, as a function of $\theta$, is independent of $\theta_j$.

\begin{lemma}\label{prop:selector}
For every Pauli sequence and every $\theta\in\mathbb T^T$,
\begin{equation}\label{eq:framebound}
  \mathsf S(\theta)=\sum_{j=1}^T a_j(\theta)\ketbra{X_j(\theta)}
  \preceq2\mathsf I_{\mathfrak g_n}.
\end{equation}
Moreover, for $1\le T\le 2^{n-4}$, 
\begin{equation}\label{eq:rprKbound}
  \prob*{\sum_{j=1}^T a_j < \frac{T}{2}}   \le e^{-\Omega(T)}.
\end{equation}
\end{lemma}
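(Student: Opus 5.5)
The proof splits into a deterministic frame bound \cref{eq:framebound} and a probabilistic counting bound \cref{eq:rprKbound}.

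\emph{Frame bound.} Let $j_1<j_2<\cdots<j_K$ be the selected indices and write $Y_k=X_{j_k}$, which are unit vectors since $\norm{P_j}_2=1$ and conjugation is unitary. Let $\mathcal W^{(k-1)}=\operatorname{span}\{Y_1,\dots,Y_{k-1}\}$. Form the Gram--Schmidt vectors $E_k=\operatorname{Proj}_{(\mathcal W^{(k-1)})^\perp}Y_k$; these are orthogonal, with $\norm{E_k}_2^2\ge 1-\frac1{32T}$, and $Y_k=E_k+R_k$ with $\norm{R_k}_2\le (32T)^{-1/2}$. For a unit vector $v$ we then have $\sum_k\abs{\braket{Y_k}{v}}^2\le 2\sum_k\abs{\braket{E_k}{v}}^2+2\sum_k\abs{\braket{R_k}{v}}^2$, which is at most $2\cdot 1+2K/(32T)$. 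This gives roughly $2+1/16$, slightly off the claimed constant. To get exactly $2$ I would use the sharper split $\abs{a+b}^2\le(1+\eta)\abs a^2+(1+1/\eta)\abs b^2$ with $\eta$ small: the first term is at most $(1+\eta)\sum_k\abs{\braket{\hat E_k}{v}}^2\le 1+\eta$ by Bessel applied to the normalized $\hat E_k$ (using $\norm{E_k}\le1$), and the second is at most $(1+1/\eta)/32$. Taking $\eta=1/4$ gives $5/4+5/128<2$.

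\emph{Counting bound.} I would bound the probability of rejection at step $j$ conditionally on the past, i.e.\ on $P_1,\dots,P_{j-1}$ and $\theta$. The subspace $\mathcal W_{j-1}$ has dimension $d\le j-1<T$ and is determined by the past. Since $P_j$ is uniform on an orthonormal basis of $\mathfrak g_n$ and $\Ad_{U_{<j}^\dagger}$ is unitary on $\mathfrak g_n$, the vector $X_j$ is uniform over an orthonormal basis. Hence by \cref{eq:orthonormalP}, $\EX{\norm{\operatorname{Proj}_{\mathcal W_{j-1}}X_j}_2^2\mid\text{past}}=d/D\le T/D$. Markov's inequality then gives
\begin{equation}
  \prob{a_j=0\mid\text{past}}\le 32T\cdot\frac{T}{D}\le \frac{32T^2}{4^n}\le \frac{32}{256}=\frac18,
\end{equation}
using $T\le 2^{n-4}$.

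Since the bound $1/8$ holds uniformly for every $\theta$ and every past, $\sum_j(1-a_j)$ is stochastically dominated by $\mathrm{Bin}(T,1/8)$. This follows from the standard martingale or Azuma argument, or by coupling through exponential moments: $\EX{e^{\lambda(1-a_j)}\mid\text{past}}\le 1+\frac18(e^\lambda-1)$, iterated over $j$. Chernoff's bound then gives $\prob{\sum_j(1-a_j)>T/2}\le e^{-\Omega(T)}$ for each fixed $\theta$, and hence also after averaging over $\theta$ (or for $\theta$ drawn jointly), which is the form needed when combining with \cref{lem:anticoncentration}.

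\emph{Main obstacle.} The main subtlety is the adaptivity: $\mathcal W_{j-1}$ depends on $\theta$ and on earlier choices of $P_i$. I would handle this by conditioning on the filtration generated by $(P_1,\dots,P_{j-1})$ with $\theta$ held fixed. The key point is that $P_j$ is independent of that filtration, and the conditional mean bound $d/D$ holds pointwise, so no structural control of $\mathcal W_{j-1}$ is needed. The frame constant is only routine bookkeeping.
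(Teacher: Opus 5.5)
Your proposal is correct and is essentially the paper's argument: the frame bound comes from comparing the selected $X_j$ with their near-unit Gram--Schmidt components, and the count comes from a conditional Markov bound $\mathbb P[a_j=0\mid\text{past}]\lesssim T\dim\mathcal W_{j-1}/D$ iterated along the filtration, with your exponential-moment/Chernoff step doing the job of the paper's union bound over rejection patterns. There are two harmless slips: with $\eta=1/4$ the second term is $5/32$ rather than $5/128$ (the total is still $45/32<2$), and the step $T/D\le T/4^n$ goes the wrong way, which you fix by using $\dim\mathcal W_{j-1}\le T-1$ so that $32T(T-1)/D\le 32T^2/4^n\le 1/8$.
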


\begin{proof}

For an accepted $j$, denote the normalized component of $X_j$ orthogonal to $\mathcal W_{j-1}$ as $Y_j$. The vectors $Y_j$ are orthonormal, and
\begin{equation}
  \norm{X_j(\theta)-Y_j}_2^2
  \le 2\norm{\operatorname{Proj}_{\mathcal W_{j-1}}X_j(\theta)}_2^2
  \le\frac1{16T}.
\end{equation}
Let $\mathrm X$ and $\mathrm Y$ be the matrices with columns $X_j$ and $Y_j$ over accepted indices; they may be viewed as linear maps $\mathbb R^K\to \mathfrak g_n$.
We have
\begin{equation}
  \norm{\mathrm X-\mathrm Y}_{\mathrm{op}}\leq
  % \norm{\mathrm X-\mathrm Y}_{\mathrm F}
  \qty(\sum a_j \norm{X_j(\theta)-Y_j}_2^2)^{1/2}
  \le \frac{1}{4},
  \qquad
  \norm{\mathrm Y}_{\mathrm{op}}\le1.
\end{equation}
Therefore, $  \norm{\mathrm X}_{\mathrm{op}}\le 5/4$ and
\begin{equation}
    \mathsf S(\theta)=\mathrm X\mathrm X^\dagger\preceq2\mathsf I_{\mathfrak g_n}.
\end{equation}
This proves the pointwise bound \cref{eq:framebound}.

It remains to bound the number of rejections and the corresponding probability. 
We introduce the filtration $(\mathcal F_k)_{k=0}^T$, where $\mathcal F_k$ is the $\sigma$-algebra generated by the first $k$ Pauli operators and rotation angles. 
Recall that $X_j=\Ad_{U_{<j}^\dagger}(P_j)$ and that $P_j$ is chosen independently of $\mathcal{F}_{j-1}$; using \cref{eq:orthonormalP}, a straightforward calculation shows:
\begin{equation}\label{eq:rpr-boundedXX}
  \EX*[][\mathcal F_{j-1}]{\ketbra{X_j}}=\EX{\ketbra{P_j}} =  \frac1D \mathsf I_{\mathfrak g_n}.
\end{equation}
Also note that $\mathcal{W}_{j-1}$ is $\mathcal{F}_{j-1}$-measurable; we therefore get:
\begin{equation}
  \qquad
  \EX*[][\mathcal F_{j-1}]{
    \norm{\operatorname{Proj}_{\mathcal W_{j-1}}X_j(\theta)}_2^2
  }
  =\frac{\dim\mathcal W_{j-1}}D\le\frac T{D}.
\end{equation}
Markov's inequality then implies
\begin{equation}
  \prob{a_j=0\mid\mathcal F_{j-1}}   \le p_{\lowT}:= \frac{32T^2}{D}.
\end{equation}

Applying the chain rule, we know that $\prob{a_j=s_j~\forall j}\le p_{\lowT}^{T/2}$ for any fixed sequence $(s_j)_{j=1}^T\in\{0,1\}^T$ with at least $T/2$ zeros.
The union bound then implies:
\begin{equation}
  \prob*{\sum_{j=1}^T a_j<\frac{T}{2}}
  \le2^T p_{\lowT}^{T/2}\leq \exp(-\Omega(T)).
\end{equation}
Here in the last step, we use $T\le 2^{n-4}$.
\end{proof}

\subsection{Proof of \texorpdfstring{\cref{cor:rpr-anticoncentration}}{the RPR small-ball theorem}}
We can now combine the above two lemmas to obtain the small-ball estimate for random Pauli rotations.

\begin{proof}
\Cref{prop:selector} supplies the functions $a_j$ satisfying the hypotheses of \cref{lem:anticoncentration} with $B=2$.
First conditioning on the Pauli operators, applying \cref{lem:anticoncentration}, then averaging over the Pauli operators and applying \cref{eq:rprKbound}, we get:
\begin{equation}
\begin{aligned}
  \prob*{d_2(U(\theta),V)\le 1}
  &\le\prob*{\sum_{j=1}^T a_j< \frac{T}{2}}
    +e^{-T/32}
    \le \exp(-\Omega(T)).
\end{aligned}
\end{equation}
The bound is uniform in $V$.
\end{proof}

\section{All-to-all 2-local random unitary circuits}
\label{sec:all-to-all-local}

In this section, we consider the all-to-all 2-local random unitary circuits.
Let $T$ be the total number of gates and let $U_T$ denote the output of the whole random circuit.
The main result in this section is the following small-ball estimate.

\begin{theorem}\label{thm:local-setwise-anticoncentration}
There exists an absolute constant $c_0>0$ such that for all $1\le T\le D$ and every nonempty finite set $\mathcal V\subset \mathbf U(N)$ satisfying $\log\abs{\mathcal V}\le c_0T$, one has
\begin{equation}\label{eq:local-setwise-anticoncentration}
  \prob*{\min_{V\in\mathcal V}d_2(U_T,V)\le1}
  \le 
\exp\qty[-\Omega\qty(\frac{T}{n^2}(1+\log\frac{D}{T}))].
\end{equation}
Furthermore, if $T=O(n^2)$ and $\log\abs{\mathcal V}=O(T)$ with sufficiently small proportional constants, the r.h.s. can be improved to $\exp(-\Omega(T))$.
\end{theorem}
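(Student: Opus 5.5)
The plan is to transplant the two-lemma structure of \cref{sec:rpr} to the 2-local model, with two modifications. The first handles the set $\mathcal V$ instead of a single $V$. The second handles the weaker averaging of 2-local tangent directions.

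\textbf{Parametrization.} Write each gate as $G_j=e^{i\theta_j Q_j}\tilde G_j$. Here $\tilde G_j$ is Haar on $\mathbf U(4)_{e_j}$, and $Q_j$ is a uniformly random nonidentity two-qubit Pauli supported on $e_j$. The angle $\theta_j$ is uniform on $\mathbb T$ and independent of everything else. By Haar invariance this reproduces the law of $G_j$. Pushing $e^{i\theta_jQ_j}$ to the right gives $\partial_jU_T=iU_TX_j$ with $X_j=\Ad_{U_{<j}^\dagger}(Q_j)$ (up to the fixed $\tilde G_j$ absorbed into $U_{<j}$). Moreover $X_j$ is independent of $\theta_j$ and $\partial_j^2U_T=-U_T$. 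Hence \cref{lem:anticoncentration} applies verbatim, conditionally on $(e_j,Q_j,\tilde G_j)_j$, for each fixed $V$.

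\textbf{The set $\mathcal V$.} A union bound over $\mathcal V$ costs a factor $\abs{\mathcal V}\le e^{c_0T}$. This is harmless for the term $e^{-T/16B}$ once $c_0<1/(16B)$. It is not harmless for $\prob{K<T/2}$, which we cannot afford to multiply. The key observation is that the selector $a_j$ in \cref{sec:greedy} does not depend on $V$. We therefore bound
\begin{equation}
\prob*{\min_V d_2\le 1}\le \prob{K<T/2}+\abs{\mathcal V}e^{-T/16B}.
\end{equation}
Here the bad event is counted once, because the Grönwall argument for each $V$ uses the same $E_{\mathrm{bad}}$.

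\textbf{The selector.} This is the main obstacle. The analogue of \cref{eq:rpr-boundedXX} fails: $\EX*[][\mathcal F_{j-1}]{\ketbra{X_j}}=\Ad_{U_{<j}^\dagger}M\,\Ad_{U_{<j}}$. The operator $M=\EX{\ketbra{Q_j}}$ has norm $\Theta(1/n^2)$ instead of $1/D$. I would use a two-stage selector.

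\emph{Stage one} is the averaging step. Replace the conditional expectation by an average over blocks of $L\approx n^2$ consecutive steps. Along each block, the random conjugation by the intervening gates spreads $M$. To quantify this, I would bound an exponential moment $\EX{\Tr\exp(\lambda\sum_{j\in\text{block}}\Pi_{\mathcal W}\ketbra{X_j}\Pi_{\mathcal W})}$ in the style of expander-walk matrix Chernoff bounds. The needed input is a single, second-moment spectral gap: the channel $A\mapsto\EX{G^\dagger AG}$ on $\mathfrak g_n$ has gap $\Omega(1/n)$. Iterating it shows the effective per-step average is $O(1/D)$ up to losses polynomial in $n$. This is the source of the $T/n^2$ rate.

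\emph{Stage two} is online row sampling. Accept $X_j$ when its leverage relative to the current frame, $\bra{X_j}(\mathsf S_{j-1}+\mathsf I)^{-1}\ket{X_j}$ or a projection version, is small. The analysis follows \cite{CohenMuscoPachocki2020}: a potential $\log\det(\mathsf S+\mathsf I)$ grows by a bounded amount per acceptance. This both keeps $\mathsf S\preceq B\mathsf I$ and limits rejections. Combining the two stages with a martingale/Freedman-type bound, the number of rejections exceeds $T/2$ with probability $\exp[-\Omega(\frac{T}{n^2}(1+\log\frac DT))]$. The $\log(D/T)$ gain comes, as in the RPR case, from the per-step rejection probability being $\lesssim T/D$, which enters via the $p_T^{T/2}$-type counting.

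\textbf{Short times.} For $T=O(n^2)$, I would skip stage one. The supports $e_j$ are then mostly distinct, and directions $X_j$ from gates on fresh pairs are nearly orthogonal to previous ones. A direct greedy argument therefore gives rejections $\le T/2$ with probability $1-e^{-\Omega(T)}$, which gives the improved bound.
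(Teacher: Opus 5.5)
\textbf{Gap 1: the set $\mathcal V$.} You correctly see that $\prob{K<T/2}$ cannot be multiplied by $\abs{\mathcal V}$, but your fix does not work. \Cref{lem:anticoncentration} does not bound $\prob{\{f^V\ge\frac12\}\cap E_{\mathrm{bad}}^c}$ by $e^{-T/16B}$. Its Gr\"onwall step controls $H(t)-\nu$, where $\nu=\prob{E_{\mathrm{bad}}}$, and $\nu$ enters only through $\prob{f\ge t,\,K\ge T/2}\ge H(t)-\nu$. The inequality $-BH'(t)\ge\frac{Tt}{2}\prob{f\ge t,\,K\ge T/2}$ is consistent with $\prob{f^V\ge\frac12,\,K\ge\frac T2}$ of order $B\nu/T$, because the decay of $H$ can be paid for by the bad part. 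So the union bound still costs about $\abs{\mathcal V}\nu$. That is useless here: $\nu$ is only $\exp[-\Omega(\frac{T}{n^2}(1+\log\frac DT))]$, while $\abs{\mathcal V}$ may be $e^{c_0T}$. You also cannot condition on $E_{\mathrm{bad}}^c$, since that event depends on every gate through the adaptive $a_j$, which breaks the integration by parts. The missing idea is to run a single tail argument for the soft maximum $f_\beta=\beta^{-1}\log(\abs{\mathcal V}^{-1}\sum_Ve^{\beta f^V})$. With $\mathcal L=\sum_ja_j\partial_j^2$ one still has $\Gamma(f_\beta)\le B$ and $-\mathcal Lf_\beta\ge Kf_\beta-\beta B$. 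Taking $\beta=k/(8B)$, the same ODE gives $\prob{f_\beta\ge\frac38}\le2\nu+e^{-k/64B}$. Since $\max_Vf^V\ge\frac12$ forces $f_\beta\ge\frac38$ once $\log\abs{\mathcal V}\le\beta/8$, $\nu$ is counted only once.

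\textbf{Gap 2: the selector.} Your second-stage test is wrong as stated. Accepting $X_j$ when $\expval{(\mathsf S+\mathsf I)^{-1}}{X_j}$ is small favors directions that are already heavily loaded: a unit vector repeated $k$ times has leverage $1/(k+1)$ and is always accepted. So $\mathsf S$ is unbounded, and bounded growth of $\log\det(\mathsf S+\mathsf I)$ says nothing about $\norm{\mathsf S}_{\mathrm{op}}$. You need an upper-barrier test against a budget that grows predictably. The paper accepts iff $\ketbra{X_j}\preceq\frac12\mathsf M$, with $\mathsf M=16(\mathsf I+\mathsf H)-\mathsf S$ and $\mathsf H$ the sum of the conditional covariances $\mathsf C_j=\EX*[][\mathcal F_{j-1}]{\ketbra{X_j}}$ over gates passing the first test. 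Then $\mathsf S\preceq16(\mathsf I+\mathsf H)$ holds pointwise, and tracking $\log\det\mathsf M$ gives $\sum_jp_j\le\frac18(N_b+K\log2)$, hence an $e^{-\Tact/16}$ tail.

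The first stage therefore has a different job from the one you give it. It must keep $\mathsf H\preceq\tilde B\mathsf I$, not measure overlap with a selected subspace $\mathcal W$. It must also be loss-free: a $\poly(n)$ loss in your ``per-step average'' makes rejections occur with probability $\poly(n)T/D$, destroying both the range $T\le D$ and the $\log(D/T)$ gain. The paper achieves this in three steps. A mixing segment of length $C_{\mathrm{mix}}n^2$ precedes each active segment, so that $\EX{\Ad_{U_m}\mathsf A\Ad_{U_m}^{-1}}\preceq\frac{2\Tr\mathsf A}{D}\mathsf I$. It then takes an exponential moment of the predictable sum of the $\mathsf C_j$; each has norm $1/d$, which is what makes $\lambda=\alpha n$, and hence $\tilde B=128+\log D/\lambda=O(1)$, affordable. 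Finally, a per-block rejection probability of at most $\Tact/(8D)$ yields the bound $(\Tact/2D)^{R/2}$.

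Your short-time step has the same problem. $\Tr(\Pi_{\mathcal W}\mathsf C_j)$ can be as large as $\dim\mathcal W/d$, so a greedy near-orthogonality test with the $1/(32T)$ threshold of \cref{sec:greedy} only works for $T\ll n$. The paper instead runs the barrier test with $\mathsf H=\sum_{j\le\ell}\mathsf C_j\preceq\mathsf I$, where $\ell=\min\{T,d\}$.

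A minor point on your parametrization: with $G_j=e^{i\theta_jQ_j}\tilde G_j$, the direction $X_j$ depends on $\tilde G_j$. Averaging over it, or allowing weight-one $Q_j$, leaves a conditional covariance of norm $\Theta(1/n)$ on single-qubit directions, not $\Theta(1/n^2)$. The fix is to put the rotation on the right with a weight-two Pauli, as the paper does with $G_j\mapsto G_je^{ihP_{e_j}}$, giving $\norm{\mathsf C_0}_{\mathrm{op}}=1/d$.
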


The proof follows the same overall strategy as that of \cref{sec:rpr}, although the technical details are more involved.
Let us first set up more notation.

We divide the circuits into blocks, each of which has length $\Theta(n^2)$.
Each block consists of a ``mixing'' segment of length $m$, followed by an ``active'' segment of length exactly $d$, where
\begin{equation}
  m=C_{\mathrm{mix}}n^2,\qquad d=\binom{n}{2}.
\end{equation}
The reason for the choice of $C_{\mathrm{mix}}$ and the terminology here will become clear later.
We denote the number of blocks by $R=T/(m+d)$.
(We may first assume $R\in\mathbb Z$ and $R\geq 1$ for convenience; general cases will be considered at the end.)
The total number of active gates is $\Tact=dR$; in particular, $\Tact=\Theta(T)$.

\begin{figure}[t]
  \centering
  \includegraphics[width=\linewidth]{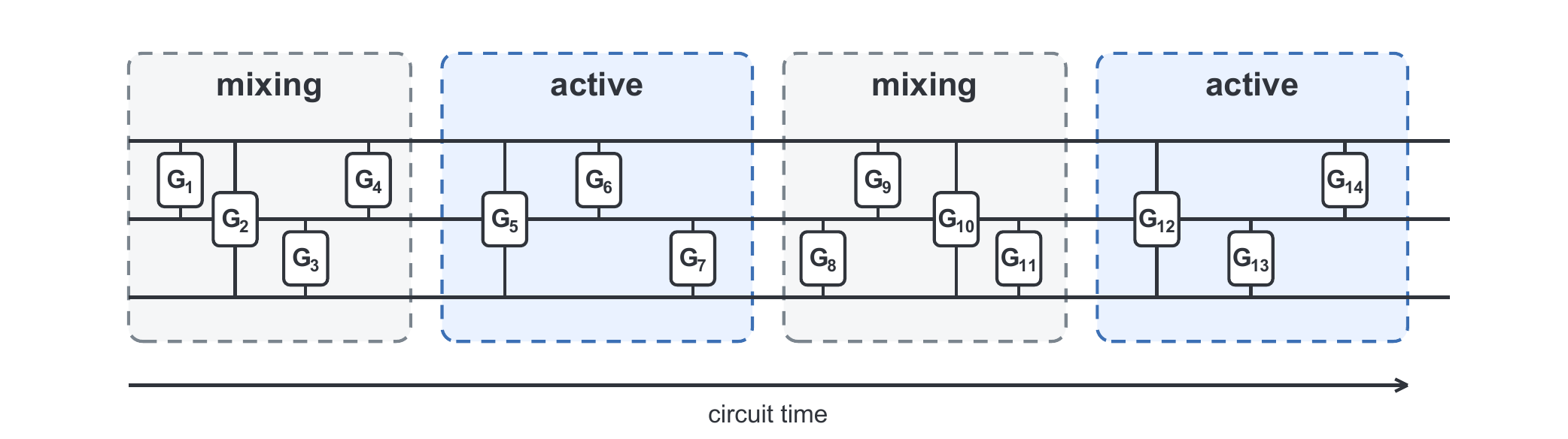}
  \caption{Schematic decomposition of the circuit into $R$ blocks, each consisting of a mixing segment of length $m$ followed by an active segment of length $d$. Thus, the total number of gates is $T=(m+d)R$.}
  \label{fig:local-blocks}
\end{figure}

The space of one gate and the circuit path space are
\begin{equation}
  \mathrm G_n
  =  \bigsqcup_{e\in\binom{[n]}2}\mathbf U(4)_e, %\{e\}\times
  \qquad
  \Omega_T=(\mathrm G_n)^T.
\end{equation}
% Conditional on the support sequence $(e_1,\ldots,e_t)$, the continuous fiber is a product of Haar spaces. 
For each pair $e$, we fix a weight-two Pauli operator $P_e$, such as $Z\otimes Z$ on that pair. 
For the $j$-th gate ($j\in[T]$), conditional on the support $e_j$, let $\partial_j$ be the left-invariant vector field generated by the one-parameter flow $G_j\longmapsto G_j e^{ihP_{e_j}}$.
These componentwise definitions combine to give a vector field on $\mathrm G_n$, and then a vector field on $\Omega_T$. They are also denoted by $\partial_j$.

Denote $G_{<j}=G_{j-1}\cdots G_1$; a direct calculation gives
\begin{equation}\label{eq:local-path-differential}
  \partial_jU_T=iU_TX_j,
  \qquad X_j=\Ad_{G_{<j}^\dagger}(P_{e_j}).
\end{equation}
Importantly, $X_j$ only depends on the previous gates $G_{<j}$ and the support of the current gate $G_j$, and is independent of $G_j$ conditional on the support.

We will select gates from the \emph{active} segments.
For each $j\in [T]$, let $a_j:\Omega_T\to\{0,1\}$ indicate whether the $j$-th gate is selected ($a_j=0$ automatically if $j$ is in  mixing segments).
Define
\begin{equation}\label{eq:local-load-and-count}
  \mathsf S=\sum_{j}a_j\ketbra{X_j}\in\operatorname{End}(\mathfrak g_n),
  \qquad
  K=\Tr \mathsf S = \sum_{j}a_j \in \mathbb N.
\end{equation}
Note that $a_j$, $\mathsf S$, and $K$ will depend adaptively on the gate sequence.
(One may view them as functions on $\Omega_T$.)
We will ensure that $a_j$ is independent of $G_j$, conditional on its support. In particular, $\partial_j a_j=0$.

\subsection{Small-ball estimates via bounded frame operators}

The next lemma isolates the analytic step that converts bounds on $\mathsf S$ and $K$ into a small-ball estimate.
It is a counterpart of \cref{lem:anticoncentration}.

\begin{lemma}\label{lem:local-stationary-small-ball}
Suppose that for any gate sequence,
\begin{equation}\label{eq:Sassumption}
  \mathsf S\preceq B\mathsf I_{\mathfrak g_n}.
\end{equation}
Also suppose that $\partial_j a_j=0$ ($\forall j\in[T]$).
Then for every $k>0$ and every nonempty finite set $\mathcal V\subset \mathbf U(N)$ satisfying 
\begin{equation}
  \log\abs{\mathcal V}\le k/(64B),
\end{equation}
it holds that
\begin{equation}\label{eq:local-stationary-small-ball}
  \prob*{\min_{V\in\mathcal V}d_2(U_T,V)\le1}
  \le 2\prob*{K<k}+\exp(-\frac{k}{64B}).
\end{equation}
\end{lemma}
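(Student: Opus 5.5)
The plan is to reduce to a single fixed $V$ and then take a union bound over $\mathcal V$. The whole difficulty is that the bad event $\{K<k\}$ must be paid for only once, not $\abs{\mathcal V}$ times. The first step is to check that the analytic identity from the proof of \cref{lem:anticoncentration} carries over verbatim.

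Since $P_{e}$ is a weight-two Pauli with $P_e^2=I$, we have $e^{ihP_e}=\cos h+i\sin h\,P_e$. Hence, for $f_V=\Re\tau(U_TV^\dagger)$, the flow $G_j\mapsto G_je^{ihP_{e_j}}$ gives $f_V\mapsto\cos h\,f_V+\sin h\,W_j$ with $W_j=-\Im\tau(U_TX_jV^\dagger)$, using \cref{eq:local-path-differential}. Thus $\partial_jf_V=W_j$ and $\partial_j^2f_V=-f_V$. Haar measure on $\mathbf U(4)_{e_j}$ is right-invariant, so conditional on the supports and on the other gates, $\EX{\partial_j(\cdot)}=0$. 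Together with $\partial_ja_j=0$, this gives for every smooth $\phi$
\begin{equation}
  \EX{K\,\phi(f_V)\,f_V}=\EX*{\phi'(f_V)\sum_ja_jW_j^2}\le B\,\EX{\phi'(f_V)}\quad(\phi'\ge0),
\end{equation}
where the inequality uses \cref{eq:Sassumption} exactly as in \cref{eq:boundW}.

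Rather than the indicator Grönwall argument, I would use an exponential tilt (a Herbst-type argument), because it turns into a bound on a single quantity that can be summed over $V$. Take $\phi(x)=e^{\lambda x}$ with $\lambda\ge0$ and set $M_V(\lambda)=\EX{e^{\lambda f_V}}$. The identity gives $\EX{Kf_Ve^{\lambda f_V}}\le\lambda BM_V(\lambda)$. The left side is then split according to $\{K\ge k\}$ and $\{K<k\}$.

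On $\{K\ge k\}$ the term is at least $k\,\EX{f_Ve^{\lambda f_V}}$, up to the portion where $f_V<0$. That portion is harmless because $xe^{\lambda x}\ge-1/(e\lambda)$. On $\{K<k\}$ I would use $\abs{f_V}\le1$ and $K\le k$, which costs at most $k e^{\lambda}\prob{K<k}$. This yields a differential inequality of the form
\begin{equation}
  k\,M_V'(\lambda)\le \lambda B M_V(\lambda)+k e^{\lambda}\prob{K<k}+O(k/\lambda).
\end{equation}
Integrating it up to $\lambda_*\asymp k/B$ should give $M_V(\lambda_*)\lesssim e^{\lambda_*^2B/(2k)}+e^{\lambda_*}\prob{K<k}\cdot(\text{const})$. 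Markov's inequality then gives $\prob{f_V\ge1/2}\le e^{-\lambda_*/2}M_V(\lambda_*)$.

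The main obstacle is the union bound. The Gaussian part yields $\abs{\mathcal V}e^{-\lambda_*/2+\lambda_*^2B/(2k)}$. Choosing $\lambda_*=k/(2B)$ gives $e^{-k/(8B)}$, and $\log\abs{\mathcal V}\le k/(64B)$ absorbs the factor $\abs{\mathcal V}$ with room to spare; this is where the constant $64$ comes from. The $\prob{K<k}$ part, however, would be multiplied by $\abs{\mathcal V}e^{\lambda_*/2}$, which is unacceptable.

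To avoid this, I would not bound $\prob{f_V\ge 1/2}$ for each $V$ separately. Instead I would bound
\begin{equation}
  \prob*{\{K\ge k\}\cap\{\exists V\in\mathcal V:\ f_V\ge 1/2\}}\le\sum_V\EX{\mathbf 1_{\{K\ge k\}}e^{\lambda(f_V-1/2)}},
\end{equation}
and add $\prob{K<k}$ once at the end. The tilted quantity $\EX{\mathbf 1_{\{K\ge k\}}e^{\lambda f_V}}$ is not directly accessible to integration by parts, since $\mathbf 1_{\{K\ge k\}}$ is not $\partial_j$-invariant.

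The fix I would try first is to run the argument under the conditional law given $\{K\ge k\}$, or equivalently to compare with a modified selector. Concretely, I would stop selecting once $K$ reaches $k$: set $\tilde a_j=a_j\mathbf 1\{\sum_{i<j}a_i<k\}$, which still satisfies $\partial_j\tilde a_j=0$ and $\tilde{\mathsf S}\preceq\mathsf S\preceq B\mathsf I_{\mathfrak g_n}$. Then $\tilde K=\min(K,k)\le k$ everywhere, and $\tilde K=k$ exactly on $\{K\ge k\}$. With $\tilde K$ in place of $K$, the error term $\EX{(k-\tilde K)f_Ve^{\lambda f_V}}$ is controlled by $k\,\EX{e^{\lambda f_V}\mathbf 1_{\{K<k\}}}$. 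I would handle this by a Cauchy–Schwarz/self-bounding step against $M_V$ itself, which is where I expect the factor $2$ in front of $\prob{K<k}$ to enter.

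If this cannot be made to work uniformly in $V$, the fallback is to carry out the Grönwall argument for $Q_V(t)=\prob{f_V\ge t,\ \tilde K=k}$ with $\tilde K$ as above. I would then sum the resulting bounds over $\mathcal V$ only after the $\prob{K<k}$ contribution has been separated out at the level of the common event.
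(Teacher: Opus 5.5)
The gap is exactly the obstacle you flag, and neither of your fixes removes it. The reduction to $\sum_{V}\EX{\mathbf 1_{\{K\ge k\}}e^{\lambda(f_V-1/2)}}$ only helps if each summand can be bounded with no dependence on $\nu=\prob{K<k}$. Every per-$V$ estimate you extract from the integration-by-parts identity picks up a contribution from $\{K<k\}$, because the identity averages over the whole path space and nothing controls $U_T$ on that event (all of $\mathcal V$ may sit near $U_T$ there, e.g.\ if $\mathcal V$ is a tight cluster).

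With the truncated selector, $\EX{\tilde K f_Ve^{\lambda f_V}}=\lambda\EX{e^{\lambda f_V}\sum_j\tilde a_jW_j^2}\le\lambda BM_V(\lambda)$ still involves the full moment generating function. The correction $k\EX{\mathbf 1_{\{K<k\}}e^{\lambda f_V}}$ can be as large as $ke^{\lambda}\nu$. Cauchy--Schwarz only gives $k\nu^{1/2}M_V(2\lambda)^{1/2}$, which is worse. So after Markov at $\lambda_*\asymp k/B$ and summation you are back to roughly $\abs{\mathcal V}e^{\lambda_*/2}\nu$.

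Your fallback does not close either. With $\phi=\mathbf 1_{\{x\ge t\}}$ the identity gives $ktQ_V(t)\le-BH_V'(t)$, with the full tail $H_V(t)=\prob{f_V\ge t}$ on the right, so there is no closed inequality for $Q_V$. Closing it via $H_V\le Q_V+\nu$ just reproduces the per-$V$ bound $\prob{f_V\ge t}\le\nu+e^{-kt^2/(2B)}$ of \cref{lem:anticoncentration}, which sums to $\abs{\mathcal V}\nu$.

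Two smaller points. First, $\partial_j\tilde a_j=0$ needs every $a_i$ with $i<j$ to be independent of $G_j$, i.e.\ adaptedness, which the lemma does not assume. Second, an additive $O(k/\lambda)$ term in $kM_V'$ is not integrable at $\lambda=0$; use $xe^{\lambda x}\ge-1$ on $[-1,0]$ instead.

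The missing idea is to treat $\mathcal V$ jointly inside the analytic argument rather than by a union bound. The paper works with a single soft maximum $f_\beta=\beta^{-1}\log\bigl(\abs{\mathcal V}^{-1}\sum_Ve^{\beta f^V}\bigr)$, with $\beta=k/(8B)$. It uses $\mathcal L=\sum_ja_j\partial_j^2$ and $\Gamma(F)=\sum_ja_j(\partial_jF)^2$, for which $-\EX{\phi(F)\mathcal LF}=\EX{\phi'(F)\Gamma(F)}$. Differentiating and using $\partial_j^2f^V=-f^V$ gives $\mathcal Lf_\beta=-K\bar f+\beta R_\beta$, where:
\begin{itemize}
\item $\bar f=\sum_Vp_Vf^V\ge f_\beta$ by the Gibbs variational principle;
\item $R_\beta$ is a sum of $a_j$-weighted variances of $W_j^V$ under the Gibbs weights $p_V\propto e^{\beta f^V}$, so $0\le R_\beta\le B$ by the frame bound;
\item $\Gamma(f_\beta)\le B$ by Jensen.
\end{itemize}
The paper then runs the indicator Gr\"onwall argument once, for $H(u)=\prob{f_\beta\ge u}$. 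There the bad event costs only the additive amount $ku\nu$, since the indicator is bounded by $1$; an exponential tilt would instead weight it by up to $e^{\lambda}$. This gives $-BH'(u)\ge ku(H(u)-\nu)-\beta BH(u)\ge\frac{ku}{2}\bigl(H(u)-2\nu\bigr)$ for $u\ge1/4$, which is where the factor $2$ comes from. Integrating over $[1/4,3/8]$ and using $f_\beta\ge\max_Vf^V-\log\abs{\mathcal V}/\beta\ge\frac12-\frac18$ finishes the proof; this last step is where $64=8\cdot8$ enters.
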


\begin{proof}
For each $V\in\mathcal V$, set
\begin{equation}
  f^V=\Re\tau(V^\dagger U_T),
  \qquad
  W_j^V:=\partial_j f^V.
\end{equation}
Since $X_j^2=I$, we have
\begin{equation}\label{eq:local-harmonic-identity}
  \partial_j^2 f^V=-f^V.
\end{equation}
Moreover, the same calculation as in the previous section shows that the
pointwise bound $\mathsf S\preceq B\mathsf I_{\mathfrak g_n}$ implies
\begin{equation}\label{eq:local-gradient-load}
  \sum_{j}a_j(W_j^V)^2\le B
  \qquad\text{for every }V\in\mathcal V.
\end{equation}

Define the operator $\mathcal L$ and its carr\'e du champ \cite{BakryGentilLedoux2014} by
\begin{equation}
  \mathcal L=\sum_{j}a_j\partial_j^2,
  \qquad
  \Gamma(F)=\sum_{j}a_j(\partial_jF)^2.
\end{equation}
Since $a_j$ is independent of $G_j$ conditional on its support, integration by parts gives, for any functions $\phi:\mathbb{R}\to\mathbb{R}$ and $F:\Omega_T\to\mathbb{R}$, 
\begin{equation}\label{eq:local-stationary-identity}
  -\EX{\phi(F)\mathcal LF}
  =\EX{\phi'(F)\Gamma(F)}.
\end{equation}

Our goal is to understand the distribution tail of $\max_V f^V$.
We introduce a smooth maximum, the normalized log-sum-exp (see, e.g. \cite{chatterjee2005error}), where $\beta$ will be determined later:
\begin{equation}\label{eq:local-soft-maximum}
  f_\beta
  =\frac1\beta\log(\frac1{\abs{\mathcal V}}
    \sum_{V\in\mathcal V}e^{\beta f^V}).
\end{equation}
We have 
\begin{equation}\label{eq:softmax}
  f_\beta \le \max_V f^V\le f_\beta+\frac{\log \abs{\mathcal V}}{\beta}.
\end{equation}

We record two estimates to be used below.
First,
\begin{equation}\label{eq:local-soft-drift}
  -\mathcal Lf_\beta\ge Kf_\beta-\beta B.
\end{equation}
To prove it, we differentiate \cref{eq:local-soft-maximum}, and use \cref{eq:local-harmonic-identity}, yielding
\begin{align}\label{eq:local-soft-generator}
  \mathcal Lf_\beta=-K\overline f+\beta R_\beta
\end{align}
where
\begin{align}
  \overline f= \sum_{V\in\mathcal V}p_V f^V,
  \quad
    p_V=\frac{e^{\beta f^V}}{\sum_{W\in\mathcal V}e^{\beta f^W}},
\quad
  R_\beta
  =\sum_{j}a_j
    \qty(\sum_Vp_V(W_j^V)^2
      -\qty(\sum_Vp_VW_j^V)^2).
\end{align}
The fact that entropy is maximized at an even distribution gives $\overline f\ge f_\beta$. \Cref{eq:local-gradient-load} gives $R_\beta\le B$.
This proves the claimed bound \cref{eq:local-soft-drift}.
Second, we have
\begin{equation}\label{eq:local-bound-Gamma}
  \Gamma(f_\beta) \leq B.
\end{equation}
This is due to $(\partial_j f_\beta)^2 = (\sum_V p_V W_j^V)^2 \leq \sum_V p_V (W_j^V)^2$ and \cref{eq:local-gradient-load}.

Let
\begin{equation}
  \nu=\prob{K<k},
  \qquad H(u)=\prob{f_\beta \ge u}.
\end{equation}
Take $\phi(x)=\mathbf1_{\{x\ge u\}}$, so that $\phi'(x)=\delta(x-u)$.
It follows that, for $u>0$,
\begin{align}
  -BH'(u) = \EX{\delta(f_\beta-u) B}
  &\ge \EX{\phi'(f_\beta)\Gamma(f_\beta)}
   =-\EX{\mathbf1_{\{f_\beta>u\}}\mathcal Lf_\beta}\notag\\
  &\ge ku\qty(H(u)-\nu)-\beta BH(u).
  \label{eq:local-tail-differential}
\end{align}
Here we use \cref{eq:local-bound-Gamma} in the first inequality, and \cref{eq:local-soft-drift} in the second inequality.
Set $\beta=k/(8B)$; then, for $u\ge 1/4$,
\begin{equation}
  -BH'(u)\ge\frac{ku}{2}\qty(H(u)-2\nu).
\end{equation}
Integrating from $1/4$ to $3/8$ gives
\begin{equation}\label{eq:local-soft-tail}
  \prob*{f_\beta\ge\frac38}
  \le 2\nu+\exp(-\frac{5k}{256B})
  \le 2\nu+\exp(-\frac{k}{64B}).
\end{equation}

Since
\begin{equation}
  d_2(U_T,V)^2=2-2f^V,
\end{equation}
$d_2(U_T,V)\le1$ implies $f^V\ge1/2$. Hence, using the bound \cref{eq:softmax}, we have
\begin{equation}
  f_\beta\ge\frac12-\frac{\log\abs{\mathcal V}}{\beta}\ge \frac38.
\end{equation}
Here we used the assumption $\log\abs{\mathcal V}\le k/(64B)=\beta/8$.
Therefore \cref{eq:local-soft-tail} gives
\begin{equation}
  \prob*{\min_{V\in\mathcal V}d_2(U_T,V)\le1}\le \prob*{f_\beta\ge\frac38}
    \le 2\nu+\exp\qty(-\frac{k}{64B}),
\end{equation}
which is the claimed estimate.
\end{proof}

\subsection{Selecting gates}

Next, we describe an algorithm to select a set of vectors satisfying the conditions required for \cref{lem:local-stationary-small-ball}.
There are two tests at each active gate: the first decides whether a vector should be accepted ``on average'', and the second decides whether to accept the actual vector.

Let $\mathcal F_j$ be the forward filtration generated by the first $j$ gate variables.
Recall that $D=\dim\mathfrak g_n$ and $d=\binom{n}{2}$.
We define
\begin{equation}\label{eq:local-segment-covariance}
  \mathsf C_0=\frac1d\sum_e\ketbra{P_e},
  \qquad
  \mathsf C_j=\EX*[][\mathcal F_{j-1}]{\ketbra{X_j}}
  =\Ad_{G_{<j}^\dagger}\mathsf C_0\Ad_{G_{<j}}.
\end{equation}
We have $\Tr\mathsf C_j = \Tr\mathsf C_0 =1$ and $\mathsf C_j\preceq \mathsf I/d$.
The operator $\mathsf C_j$ is determined by $\mathcal F_{j-1}$.

We will keep two running sums $\mathsf H$ and $\mathsf S$. 
The $\mathsf H$ will sum up $\mathsf C_j$ for those $j$s that pass the first test;
$\mathsf S$ will sum up the actual $\ketbra{X_j}$ for those $j$s that also pass the second test.
We fix a sufficiently small absolute constant $\alpha>0$ (to be determined later) and define
\begin{equation}\label{eq:local-covariance-potential}
  \Phi(\mathsf H)=\frac1\lambda\log\frac{\Tr e^{\lambda\mathsf H}}D,
  \qquad \text{where~}
  \lambda=\alpha n.
\end{equation}
It may be viewed as a soft version of $\norm{\mathsf H}_{\mathrm{op}}$.

We initialize $\mathsf H=\mathsf S=\mathsf 0$.
Then we repeat the following steps sequentially for all active gates. 
At the beginning of each active segment $r$, store the current value of
$\mathsf H$ as $\mathsf H_r^{\mathrm{start}}$.  
For each gate $j$ in segment $r$, perform the following steps:
\begin{enumerate}
  \item \textbf{First test.} Set
  \begin{equation}\label{eq:local-covariance-selector}
    b_j=\begin{cases}
      1, & \text{if } \Phi(\mathsf H+\mathsf C_j)
        \le \Phi(\mathsf H_r^{\mathrm{start}})+128/R,\\
      0, & \text{otherwise},
    \end{cases}
  \end{equation}
  and update
  \begin{equation}\label{eq:update1}
    \mathsf H\longleftarrow\mathsf H+b_j\mathsf C_j.
  \end{equation}
  If $b_j=0$, we set $a_j=0$, skip step 2, and proceed to the next active gate.

  \item \textbf{Second test.} If $b_j=1$, we compute
  \begin{equation}
    q_j = \expval{(16(\mathsf I+\mathsf H)-\mathsf S)^{-1}}{X_j},
  \end{equation}
  where $\mathsf H$ denotes its value after the preceding update \cref{eq:update1}.
  We then set
  \begin{equation}
    a_j=\begin{cases}
      1, & \text{if } q_j\le 1/2,\\
      0, & \text{otherwise},
    \end{cases}
  \end{equation}
  and update
  \begin{equation}
    \mathsf S\longleftarrow\mathsf S+a_j\ketbra{X_j}.
  \end{equation}
  We remark that $q_j\le 1/2$ if and only if $\ketbra{X_j}\preceq \tfrac12 (16(\mathsf I+\mathsf H)-\mathsf S)$.
\end{enumerate}

Note that $b_j$ is fully determined by previous gates, and $a_j$ is independent of its own Haar gate conditional on its support $e_j$, as required.

\subsubsection{Analysis of the first selector}

We use the following three lemmas to establish that with large probability, at least a constant portion of the active gates pass the first test (\cref{lem:local-covariance-count}).

\begin{lemma}\label{lem:local-isotropization}
There is an absolute constant $C_{\mathrm{mix}}$ such that, for
$m=C_{\mathrm{mix}}n^2$ and an all-to-all 2-local random unitary $U_m$
consisting of $m$ gates,
\begin{equation}\label{eq:local-isotropization}
  \EX{\Ad_{U_m} \mathsf A \Ad_{U_m}^{-1}}
  \preceq \frac{2\Tr \mathsf A}{D}\mathsf I_{\mathfrak g_n},
\end{equation} 
where $\mathsf A$ is any deterministic positive semidefinite operator on $\mathfrak g_n$.
\end{lemma}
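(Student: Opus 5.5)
The plan is to reduce \cref{eq:local-isotropization} to a second-moment (order-$(2,2)$) spectral-gap estimate for the all-to-all 2-local random walk, and then to exploit the fact that the Haar value of the averaged map is exactly $\frac{\Tr\mathsf A}{D}\mathsf I_{\mathfrak g_n}$.

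\textbf{Step 1: the linear map and its fixed points.} Consider the linear map
\begin{equation}
  \mathcal M(\mathsf A)=\EX{\Ad_{G}\mathsf A\Ad_{G}^{-1}}
\end{equation}
on $\operatorname{End}(\mathfrak g_n)$, with $G$ a single random gate. Then the left-hand side of \cref{eq:local-isotropization} is $\mathcal M^m(\mathsf A)$, because the gates are i.i.d.

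Equip $\operatorname{End}(\mathfrak g_n)$ with the Hilbert--Schmidt inner product $\Tr(\mathsf A^\dagger\mathsf B)$. The one-gate distribution is invariant under $G\mapsto G^\dagger$, since Haar measure on each $\mathbf U(4)_e$ is inversion invariant. Hence $\mathcal M$ is self-adjoint with $\norm{\mathcal M}\le1$.

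Its fixed points are the operators commuting with $\Ad_G$ for all 2-qubit gates $G$. Two-qubit gates generate $\mathbf U(N)$, so these are exactly the operators commuting with $\Ad_U$ for all $U\in\mathbf U(N)$. The adjoint action of $\mathbf U(N)$ on (the complexification of) $\mathfrak g_n$ is irreducible. By Schur's lemma the fixed space is therefore $\mathbb C\,\mathsf I_{\mathfrak g_n}$, and the orthogonal projection onto it is
\begin{equation}
  \Pi(\mathsf A)=\frac{\Tr\mathsf A}{D}\mathsf I_{\mathfrak g_n},
\end{equation}
which coincides with the Haar average.

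\textbf{Step 2: spectral gap and contraction.} Since $\Ad_U$ is the restriction of $U\otimes\bar U$, the map $\mathcal M$ is a compression of the $k=2$ moment operator of the walk. In particular $\norm{\mathcal M-\Pi}\le 1-\gamma$, where $\gamma$ is the second-moment spectral gap.

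For all-to-all 2-local circuits, $\gamma=\Omega(1/n)$. This follows from the uniform gap of \cite{BaerHaah2026} quoted in the introduction, or already from the classical $k=2$ analysis. We then obtain
\begin{equation}
  \norm{\mathcal M^m(\mathsf A)-\Pi(\mathsf A)}_{\mathrm{op}}
  \le\norm{\mathcal M^m(\mathsf A)-\Pi(\mathsf A)}_{\mathrm{HS}}
  \le(1-\gamma)^m\norm{\mathsf A-\Pi(\mathsf A)}_{\mathrm{HS}}
  \le e^{-\gamma m}\Tr\mathsf A .
\end{equation}
The last inequality holds because $\mathsf A\succeq0$ gives $\norm{\mathsf A-\Pi\mathsf A}_{\mathrm{HS}}\le\norm{\mathsf A}_{\mathrm{HS}}\le\Tr\mathsf A$.

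\textbf{Step 3: choosing $C_{\mathrm{mix}}$.} We need $e^{-\gamma m}\le 1/D$. Since $\log D\le 2n\log2$ and $\gamma\ge c/n$, it suffices to take $m\ge 2n^2\log 2/c$, i.e.\ $C_{\mathrm{mix}}=2\log2/c$. Combined with Step 2 this gives
\begin{equation}
  \mathcal M^m(\mathsf A)\preceq\frac{\Tr\mathsf A}{D}\mathsf I_{\mathfrak g_n}+\frac{\Tr\mathsf A}{D}\mathsf I_{\mathfrak g_n}=\frac{2\Tr\mathsf A}{D}\mathsf I_{\mathfrak g_n}.
\end{equation}

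\textbf{Main obstacle.} The only nontrivial input is the gap bound $\gamma=\Omega(1/n)$ for the degree-$(2,2)$ moment operator. I would cite it, checking that it applies to the adjoint representation, which is a subrepresentation of $U^{\otimes2}\otimes\bar U^{\otimes2}$, so the gap transfers directly.

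If only a weaker gap such as $\Omega(1/n^2)$ were available, the same argument would require $m\sim n^3$. Two options would avoid this. One is to track the Pauli-weight Markov chain for $\mathbb E\,\Ad_U\otimes\Ad_U$ explicitly. The other is to note that the error term can be controlled using $\norm{\cdot}_{\mathrm{HS}}\le\Tr\mathsf A$ together with a gap applied on the weight-graded decomposition.

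Minor points to handle along the way: the real versus complex structure of $\mathfrak g_n$ (work in the complexification, where $\preceq$ statements are equivalent), and the fact that $\Pi$ is orthogonal with respect to the Hilbert--Schmidt inner product, which holds because $\mathcal M$ is self-adjoint.
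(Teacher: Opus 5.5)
Your proposal is correct and follows essentially the same route as the paper. Both arguments contract the one-gate second-moment superoperator $\mathcal M$ toward $\frac{\Tr\mathsf A}{D}\mathsf I$ using the $\Omega(1/n)$ two-design gap, bound the Frobenius error by $e^{-\gamma m}\Tr\mathsf A$, choose $C_{\mathrm{mix}}$ so that this is at most $1/D$, and pass to operator norm. You also spell out two points the paper leaves implicit: the identification of the fixed space via Schur's lemma, and why the gap restricts to the adjoint sector.
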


\begin{proof}
  We define two (super)operators on $\operatorname{End}(\mathfrak g_n)$:
\begin{equation}\label{eq:local-second-moment}
  \mathcal M(\mathsf A)=\EX{\Ad_{G^\dagger}\mathsf A\Ad_G},
  \qquad
  \mathcal P(\mathsf A)=\frac{\Tr\mathsf A}{D}\mathsf I,
\end{equation}
where $G$ is one 2-local Haar gate and $\mathcal P$ is the orthogonal projection onto the scalar operators.
$\mathcal M$ is positive semidefinite and self-adjoint, since Haar averaging over each fixed edge gives an orthogonal projection.
Furthermore, the 2-design spectral gap \cite{HarrowLow2009,DinizJonathan2011} gives, for some $\gamma=\Omega(1/n)$,
\begin{equation}\label{eq:local-transfer-gap}
  0\preceq\mathcal M\preceq(1-\gamma)\mathcal I+\gamma\mathcal P.
\end{equation}
Taking $m$ steps contracts the orthogonal complement of the scalar operators further:
\begin{equation}\label{eq:2.8}
  \norm{
    \EX{\Ad_{U_m} \mathsf A \Ad_{U_m}^{-1}}
    -\frac{\Tr \mathsf A }D\mathsf I_{\mathfrak g_n}
  }_{\mathrm F}
  \le e^{-\gamma m}
  \norm{
    \mathsf A -\frac{\Tr \mathsf A }D\mathsf I_{\mathfrak g_n}
  }_{\mathrm F}.
\end{equation}
Here we write $\norm{\cdot}_{\mathrm F}$ for the (unnormalized) Frobenius norm of linear operators on $\mathfrak g_n$:
\begin{equation}
  \norm{\mathsf A}_{\mathrm F} = (\operatorname{Tr}(\mathsf A^\dagger \mathsf A))^{1/2}.
\end{equation}

Since $\mathsf A \succeq0$, we can bound the r.h.s. as
\begin{equation}
  e^{-\gamma m}
  \norm{\mathsf A -\frac{\Tr \mathsf A}D\mathsf I}_{\mathrm F}
  \le e^{-\gamma m}\sqrt{\Tr(\mathsf A^2)}
  \le e^{-c C_{\mathrm{mix}}n}\Tr \mathsf A.
\end{equation}
Choose $C_{\mathrm{mix}}$ large enough so that
$e^{-c C_{\mathrm{mix}}n}\le D^{-1}$.
The claimed operator inequality \cref{eq:local-isotropization} follows from
$\norm{\cdot}_{\mathrm{op}}\le\norm{\cdot}_{\mathrm F}$.
\end{proof}

\begin{lemma}\label{lem:local-covariance-moment}
  Fix an active segment $r$, and let $\mathsf H_r^{\mathrm{raw}}$ be the sum of all $\mathsf C_j$ for $j$ in segment $r$.
For $\lambda=\alpha n$ with $\alpha>0$ sufficiently small, we have
\begin{equation}\label{eq:local-block-exponential-increment}
  \EX*[][\text{past blocks}]{e^{\lambda\mathsf H_r^{\mathrm{raw}}}-\mathsf I}
  \preceq\frac{16\lambda d}{D}\mathsf I.
\end{equation}
\end{lemma}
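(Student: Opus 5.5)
The plan is to average over the mixing segment first, which turns the operator inequality into a scalar trace bound. After that, the remaining work is an exponential-moment bound for the norm of the within-segment covariance sum.

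\textbf{Step 1: factor out the mixing segment.} Let $V$ be the product of all gates in past blocks, $U_{\mathrm{mix}}$ the product of the $m$ mixing gates of block $r$, and, for $j$ in active segment $r$, $G'_{<j}$ the product of the earlier active gates in that segment. Then $G_{<j}=G'_{<j}U_{\mathrm{mix}}V$, and therefore
\begin{equation}
  \mathsf H_r^{\mathrm{raw}}=\Ad_{V^\dagger}\Ad_{U_{\mathrm{mix}}^\dagger}\,\mathsf Y\,\Ad_{U_{\mathrm{mix}}}\Ad_{V},
  \qquad
  \mathsf Y=\sum_{j\in\text{seg }r}\Ad_{G'^{\dagger}_{<j}}\mathsf C_0\Ad_{G'_{<j}}.
\end{equation}
Since $e^{\lambda(\cdot)}-\mathsf I$ commutes with unitary conjugation, we have $e^{\lambda\mathsf H_r^{\mathrm{raw}}}-\mathsf I=\Ad\cdots(e^{\lambda\mathsf Y}-\mathsf I)\cdots$, with the same conjugations. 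Conditional on the past, $V$ is fixed. Moreover, $\mathsf Y$ depends only on the active gates, which are independent of $U_{\mathrm{mix}}$.

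The law of $U_{\mathrm{mix}}^\dagger$ equals the law of $U_m$: inverting iid Haar gates and reversing their order leaves the distribution unchanged. So I first condition on $\mathsf Y$ and apply \cref{lem:local-isotropization} to the PSD operator $\mathsf A=e^{\lambda\mathsf Y}-\mathsf I$. Conjugation by the fixed $V$ preserves $\mathsf I$, so this gives
\begin{equation}
  \EX*[][\text{past}]{e^{\lambda\mathsf H_r^{\mathrm{raw}}}-\mathsf I}\preceq\frac{2}{D}\,\EX*{\Tr\bigl(e^{\lambda\mathsf Y}-\mathsf I\bigr)}\,\mathsf I .
\end{equation}
It therefore suffices to prove $\EX{\Tr(e^{\lambda\mathsf Y}-\mathsf I)}\le 8\lambda d$.

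\textbf{Step 2: reduce to the norm of $\mathsf Y$.} Each summand of $\mathsf Y$ has trace $1$, so $\Tr\mathsf Y=d$. Also $\mathsf Y\succeq0$, and $e^{\lambda y}-1\le\lambda y\,e^{\lambda y}$ for $y\ge0$. Hence
\begin{equation}
  \Tr\bigl(e^{\lambda\mathsf Y}-\mathsf I\bigr)\le\lambda\,\Tr(\mathsf Y)\,e^{\lambda\norm{\mathsf Y}_{\mathrm{op}}}=\lambda d\,e^{\alpha n\norm{\mathsf Y}_{\mathrm{op}}}.
\end{equation}
The goal becomes $\EX{e^{\alpha n\norm{\mathsf Y}_{\mathrm{op}}}}\le 8$. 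Equivalently, I need $\norm{\mathsf Y}_{\mathrm{op}}$ to have an exponential tail at scale $1/n$: $\prob{\norm{\mathsf Y}_{\mathrm{op}}\ge s}\le C e^{-c n s}$ with $c$ an absolute constant. The crude bound $\mathsf C_j\preceq \mathsf I/d$ only gives $\norm{\mathsf Y}_{\mathrm{op}}\le 1$, which is useless here.

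\textbf{Step 3 (main obstacle): the tail of $\norm{\mathsf Y}_{\mathrm{op}}$.} The heuristic is as follows. A direction $P_e$ persists in $\mathsf Y$, contributing roughly $1/d$ per step, only until some gate touches one of the two qubits of $e$. That happens at rate about $4/n$ per gate, so the survival time is geometric with mean about $n/4$. Once a gate sharing a qubit acts, the Haar gate destroys most of the overlap with the old direction. This suggests that the eigenvalues of $\mathsf Y$ are of order $(n/4)/d\sim 1/n$, with exponential tails at that scale.

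To make this rigorous, I would run a matrix-martingale / Golden--Thompson peeling over the $d$ active gates, in the spirit of \cite{GLSS2018}. Write $\mathsf Y=\sum_j\mathsf C'_j$ and control $\EX{\Tr\exp(\theta\sum_{i\le j}\mathsf C'_i)}$ one gate at a time. The one-step averaging $\EX*[][]{\Ad_{G^\dagger}(\cdot)\Ad_G}=\mathcal M$ should give the needed contraction. Specifically, I would show that the component of $\mathsf C'_j$ along the span of weight-two Paulis supported near the touched pair is halved in expectation, while the untouched part changes only by the rank-$O(n)$ set of pairs sharing a qubit with $e_j$.

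If the global peeling proves too delicate, my fallback is to bound $\norm{\mathsf Y}_{\mathrm{op}}$ through the Gram matrix of the vectors $\Ad_{G'^\dagger_{<j}}P_e$, which is indexed by pairs $(j,e)$. I would use a Schur (maximal row sum) test, with overlaps decaying like $e^{-c|i-j|/n}$ for a fixed $e$ and vanishing for disjoint light cones. An exponential moment is then taken over the geometric survival times. I expect this step to carry all of the difficulty; the choice $\lambda=\alpha n$ with small $\alpha$ is exactly what keeps $\alpha n\cdot O(1/n)$ below a constant, so that $8$ is attainable.
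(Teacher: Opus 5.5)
Your Step~1 is correct and matches the last part of the paper's proof: averaging over the mixing segment with \cref{lem:local-isotropization} reduces the lemma to $\mathbb{E}\Tr(e^{\lambda\mathsf Y}-\mathsf I)\le 8\lambda d$. The gap is in Step~2. The inequality $\Tr(e^{\lambda\mathsf Y}-\mathsf I)\le\lambda d\,e^{\lambda\|\mathsf Y\|_{\mathrm{op}}}$ is valid, but the target it leaves you with, $\mathbb{E}\,e^{\alpha n\|\mathsf Y\|_{\mathrm{op}}}\le 8$ (or $\le C$ for any absolute $C$), is false for large $n$. Suppose none of the first $t$ active gates touches qubits $a$ or $b$. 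Then $\Ad_{G'_{<j}}$ fixes $P_{ab}$ for all $j\le t$, so $\langle P_{ab}|\mathsf Y|P_{ab}\rangle\ge t/d$. Each gate touches a given qubit with probability $2/n$, so after $t=\frac n4\log n\le d$ gates roughly $ne^{-2t/n}=\sqrt n$ qubits are still untouched. Hence, with high probability, $\|\mathsf Y\|_{\mathrm{op}}\ge\frac{\log n}{2(n-1)}$, and $\mathbb{E}\,e^{\alpha n\|\mathsf Y\|_{\mathrm{op}}}\gtrsim n^{\alpha/2}$ for every fixed $\alpha$.

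Your survival-time heuristic in Step~3 is right for a typical direction. The top eigenvalue, however, picks up a coupon-collector factor $\log n$ from maximizing over the $\binom n2$ pairs. So the tail $\mathbb P[\|\mathsf Y\|_{\mathrm{op}}\ge s]\le Ce^{-cns}$ fails at $s\sim\log n/n$, and your Schur-test fallback suffers the same loss through its maximal row sum. You cannot let $\alpha$ shrink with $n$ either, since $\tilde B=128+\log D/\lambda=O(1)$ requires $\lambda=\Omega(n)$. The trace bound itself holds only because very few eigenvalues are that large. Bounding every eigenvalue's factor by $e^{\lambda\|\mathsf Y\|_{\mathrm{op}}}$ discards exactly this. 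Going through the norm by matrix Chernoff does not help: since $\Tr e^{\theta\mathsf Y}\ge D$, it costs a factor $4^n$.

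The missing idea is to bound the trace exponential moment directly, never passing through $\|\mathsf Y\|_{\mathrm{op}}$. The paper first applies the multivariate Golden--Thompson inequality, which bounds $\Tr e^{\lambda\mathsf Y}$ by an average over $t$ of $\|\prod_j e^{(1+it)\lambda\mathsf C'_j/2}\|_{\mathrm F}^2$. It then writes the product as $\mathsf B_t\Ad_{G_1^\dagger}\mathsf B_t\cdots\Ad_{G_{d-1}^\dagger}\mathsf B_t\Ad_{G'_{<d}}$ with $\mathsf B_t=e^{(1+it)\lambda\mathsf C_0/2}$, and averages over the independent gates. This gives $\Tr\mathcal T_t^{d-1}(e^{\lambda\mathsf C_0})$ with $\mathcal T_t=\mathsf B_t\mathcal M(\cdot)\mathsf B_t^*$.

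A similarity transform by $\mathcal E=e^{\lambda\mathsf C_0/4}(\cdot)e^{\lambda\mathsf C_0/4}$ bounds this by $\|\mathcal E\mathcal M\mathcal E\|_{\mathrm{op}}^{d-1}\Tr e^{\lambda\mathsf C_0}$. The 2-design gap $\mathcal M\preceq(1-\gamma)\mathcal I+\gamma\mathcal P$, together with $\lambda/d\le\gamma/4$ (this is where small $\alpha$ enters), yields $\|\mathcal E\mathcal M\mathcal E\|_{\mathrm{op}}\le1+2\eta v$, where $\eta=d/D$ and $v=e^{\lambda/d}-1$. The only non-contracting directions lie in $\operatorname{span}\{\mathsf I-\mathsf\Pi,\mathsf\Pi\}$, which a $2\times2$ computation handles. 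The result is $\mathbb{E}\Tr e^{\lambda\mathsf Y}\le De^{4\lambda d/D}$, hence the bound $8\lambda d$. The Golden--Thompson peeling with $\mathcal M$ that you sketch in Step~3 is in this spirit, but it must be aimed at $\mathbb{E}\Tr e^{\lambda\mathsf Y}-D$ itself rather than at the norm.
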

Note that here the ``past blocks'' in the condition do not include that particular mixing segment immediately preceding the active segment $r$.
% This lemma may be viewed as a counterpart of \cref{eq:rpr-boundedXX}.

\begin{proof}
First, for convenience, let us remove all gates before the active segment $r$. 
We number the gates in this segment as $G_1,\ldots,G_d$, so that
\begin{equation}
    \mathsf H_r^{\mathrm{raw}}=\sum_{j=1}^d\mathsf C_j,
  \qquad
  \mathsf C_j=\Ad_{G_{<j}^\dagger}\mathsf C_0\Ad_{G_{<j}}.
\end{equation}
We use the operators $\mathcal M,\mathcal P$ defined in \cref{eq:local-second-moment} and the spectral-gap bound \cref{eq:local-transfer-gap}.

The multivariate Golden--Thompson inequality \cite{SutterBertaTomamichel2017} gives
\begin{equation}
  \Tr e^{\lambda\mathsf H_r^{\mathrm{raw}}}
  \le\int_{\mathbb R}\beta_0(t)
  \Bigl\|\prod_{j=1}^d e^{(1+it)\lambda\mathsf C_j/2}\Bigr\|_{\mathrm F}^2\,dt,
  \qquad\text{where~}
  \mathsf B_t=e^{(1+it)\lambda\mathsf C_0/2}.
\end{equation}
Here $\beta_0$ is a probability density.
The product inside the norm can be rewritten as:
\begin{equation}
  \begin{aligned}
      &\mathsf B_t \cdot \Ad_{G_1^\dagger}\mathsf B_t \Ad_{G_1} \cdot \Ad_{G_1^\dagger G_2^\dagger}\mathsf B_t \Ad_{G_2 G_1} \cdots \Ad_{G_{<d}^\dagger} \mathsf B_t\Ad_{G_{<d}}\\
  =&
  \mathsf B_t\Ad_{G_1^\dagger}\mathsf B_t\cdots\Ad_{G_{d-1}^\dagger}\mathsf B_t\Ad_{G_{<d}}
  \end{aligned}
\end{equation}
Taking the average over the independent gates $G_1$ to $G_{d-1}$, we get:
\begin{equation}
  \EX*{\norm{\mathsf B_t\Ad_{G_1^\dagger}\mathsf B_t\cdots\Ad_{G_{d-1}^\dagger}\mathsf B_t}_{\mathrm F}^2}
  =\Tr\mathcal T_t^{d-1}(e^{\lambda\mathsf C_0}),
  \qquad\text{where~}
  \mathcal T_t(\cdot)=\mathsf B_t\mathcal M(\cdot)\mathsf B_t^*.
\end{equation}

To bound this expression, we use the decomposition
\begin{equation}
  \mathcal E^{-1}\mathcal T_t\mathcal E=\mathcal U_t\mathcal K
\end{equation}
where
\begin{equation}
  \mathcal E(\cdot)=e^{\lambda\mathsf C_0/4}(\cdot) e^{\lambda\mathsf C_0/4},
  \qquad
    \mathcal K=\mathcal E\mathcal M\mathcal E,
  \qquad
  \mathcal U_t(\cdot)=e^{it\lambda\mathsf C_0/2}(\cdot) e^{-it\lambda\mathsf C_0/2}.
\end{equation}
Note that $\mathcal U_t$ is unitary; using Cauchy--Schwarz, we have:
\begin{equation}
  \begin{aligned} 
  \Tr\mathcal T_t^{d-1}\bigl(e^{\lambda\mathsf C_0}\bigr)
  &= \Tr\!\left[\mathcal E(\mathcal U_t\mathcal K)^{d-1}
    \mathcal E^{-1}\bigl(e^{\lambda\mathsf C_0}\bigr)\right] 
   = \Tr\!\left[e^{\lambda\mathsf C_0/2} (\mathcal U_t\mathcal K)^{d-1}(
    e^{\lambda\mathsf C_0/2})\right]\\
  &\le \norm{\mathcal K}^{d-1}_{\mathrm{op}}\Tr e^{\lambda\mathsf C_0}.
\end{aligned}
\end{equation}
We therefore obtain
\begin{equation}\label{eq:local-transfer-reduction}
  \EX*{\Tr e^{\lambda\mathsf H_r^{\mathrm{raw}}}}
  \le \int_{\mathbb R}\beta_0(t)
    \Tr\mathcal T_t^{d-1}(e^{\lambda\mathsf C_0})\dd t
  \le \norm{\mathcal K}^{d-1}_{\mathrm{op}} \Tr e^{\lambda\mathsf C_0}.
\end{equation}

Now we estimate $\norm{\mathcal K}_{\mathrm{op}}$.
Take $\alpha$ small enough so that $\lambda/d\le\gamma/4$.
By \cref{eq:local-transfer-gap},
\begin{equation}\label{eq:opkbound}
  0\preceq\mathcal K
  \preceq(1-\gamma)\mathcal E^2+\gamma\mathcal E\mathcal P\mathcal E.
\end{equation}
On the orthogonal complement of $\operatorname{span}\{\mathsf I-\mathsf\Pi,\mathsf\Pi\}$ (here $\mathsf\Pi=d\mathsf C_0$ is an orthogonal projector), a straightforward calculation shows that the image of $\mathcal E$ is always traceless, hence $\mathcal P \mathcal E=0$; the operator norm of $\mathcal E$ is upper bounded by $e^{\lambda/(2d)}$.
Therefore, the r.h.s. of \cref{eq:opkbound} has operator norm at most $(1-\gamma)e^{\lambda/d}<1$.
On $\operatorname{span}\{\mathsf I-\mathsf\Pi,\mathsf\Pi\}$, in the normalized basis, its matrix is
\begin{equation}
  \begin{pmatrix}
    1-\gamma\eta & \gamma\sqrt{\eta(1-\eta)(1+v)}\\
    \gamma\sqrt{\eta(1-\eta)(1+v)} & (1-\gamma+\gamma\eta)(1+v)
  \end{pmatrix}
  \preceq(1+2\eta v)I_2.
\end{equation}
Here $\eta=d/D$ and $v=e^{\lambda/d}-1\le2\lambda/d\le\gamma/2$, and the last inequality can be checked directly.
Consequently, 
\begin{equation}
  \norm{\mathcal K}_{\mathrm{op}}\le1+2\eta v.
\end{equation}
Using \cref{eq:local-transfer-reduction}, we obtain
\begin{equation}
  \EX*{\Tr e^{\lambda\mathsf H_r^{\mathrm{raw}}}} 
\le (1+2\eta v)^{d-1} \Tr e^{\lambda\mathsf C_0}
  = D(1+\eta v)(1+2\eta v)^{d-1}
  \le D\exp(\frac{4\lambda d}{D}).
\end{equation}
Subtracting $D$ and noting that $\lambda/d\le\gamma/4$ implies $\lambda d/D\le 1/4$, we get 
\begin{equation}\label{eq:local-covariance-excess}
  \EX*{\Tr(e^{\lambda\mathsf H_r^{\mathrm{raw}}}-\mathsf I)}\le8\lambda d.
\end{equation}

Now back to the actual circuit. 
Write $U_1$ for the circuit formed by the earlier blocks, $U_2$ for the preceding mixing circuit, and $\widetilde{\mathsf H}_r$ for the covariance sum estimated above (with all previous segments deleted).
The actual covariance sum for segment $r$ is
\begin{equation}
  \mathsf H_r^{\mathrm{raw}}
  =\Ad_{U_1^\dagger}\Ad_{U_2^\dagger}
    \widetilde{\mathsf H}_r\Ad_{U_2}\Ad_{U_1}.
\end{equation}
For each fixed realization of the active gates, \cref{lem:local-isotropization} gives
\begin{equation}
  \EX*[U_2]{\Ad_{U_2^\dagger}
    (e^{\lambda\widetilde{\mathsf H}_r}-\mathsf I)\Ad_{U_2}}
  \preceq\frac{2}{D}
    \Tr(e^{\lambda\widetilde{\mathsf H}_r}-\mathsf I)\mathsf I,
\end{equation}
where the expectation is over the mixing gates only.
Conjugating by $\Ad_{U_1^\dagger}$ and averaging over the active gates, \cref{eq:local-covariance-excess} therefore yields
\begin{equation}
  \EX*[][\text{past blocks}]{e^{\lambda\mathsf H_r^{\mathrm{raw}}}-\mathsf I}
  \preceq\frac{2}{D}
    \EX*{\Tr(e^{\lambda\widetilde{\mathsf H}_r}-\mathsf I)}\mathsf I
  \preceq\frac{16\lambda d}{D}\mathsf I.
\end{equation}
\end{proof}

\begin{lemma}\label{lem:local-covariance-count}
% Assume $T\le D$.
The number $N_b:=\sum_jb_j$ of gates passing the first test satisfies
\begin{equation}\label{eq:local-covariance-count}
  \prob*{N_b<\frac{\Tact}{2}}\le\qty(\frac{\Tact}{2D})^{R/2}.
\end{equation}
\end{lemma}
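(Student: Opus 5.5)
The plan is to reduce the event $\{N_b<\Tact/2\}$ to the event that many active segments contain at least one rejection. Then I bound the conditional probability of a rejection in each segment using Lemma~\ref{lem:local-covariance-moment} together with Lieb's concavity theorem, and finish with the same chain-rule/union-bound argument as in Lemma~\ref{prop:selector}.

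\emph{Step 1 (reduction to bad segments).} Call segment $r$ \emph{bad} if some gate $j$ in it has $b_j=0$. A good segment contributes exactly $d$ accepted gates. So if fewer than $R/2$ segments are bad, then $N_b\ge d\cdot R/2=\Tact/2$. Hence $\{N_b<\Tact/2\}$ is contained in the event that at least $R/2$ segments are bad.

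Next I characterize a bad segment. Within segment $r$, the running $\mathsf H$ always satisfies $\mathsf H\preceq \mathsf H_r^{\mathrm{start}}+\sum_{i<j}\mathsf C_i$, with the sum over earlier gates $i$ in the segment. Since the $\mathsf C_i\succeq0$ and $\Phi$ is operator-monotone (because $\Tr e^{\lambda\cdot}$ is monotone), a rejection at gate $j$ forces
\begin{equation}
  \Phi(\mathsf H_r^{\mathrm{start}}+\mathsf H_r^{\mathrm{raw}})>\Phi(\mathsf H_r^{\mathrm{start}})+128/R .
\end{equation}
Equivalently, $\Tr e^{\lambda(\mathsf H_r^{\mathrm{start}}+\mathsf H_r^{\mathrm{raw}})}> e^{128\lambda/R}\,\Tr e^{\lambda\mathsf H_r^{\mathrm{start}}}$.

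\emph{Step 2 (conditional probability of a bad segment).} Condition on the past blocks, that is, all gates before the mixing segment preceding segment $r$. Then $\mathsf H_r^{\mathrm{start}}$ is fixed, because $\mathsf H$ changes only inside active segments. Write $\mathsf Y=e^{\lambda\mathsf H_r^{\mathrm{raw}}}$. By Lieb's concavity theorem, the map $\mathsf Y\mapsto\Tr\exp(\lambda\mathsf H_r^{\mathrm{start}}+\log\mathsf Y)$ is concave. Jensen's inequality then gives
\begin{equation}
  \EX*[][\text{past blocks}]{\Tr e^{\lambda(\mathsf H_r^{\mathrm{start}}+\mathsf H_r^{\mathrm{raw}})}}
  \le \Tr\exp\qty(\lambda\mathsf H_r^{\mathrm{start}}+\log\EX*[][\text{past blocks}]{\mathsf Y}).
\end{equation}
Lemma~\ref{lem:local-covariance-moment} gives $\EX{\mathsf Y}\preceq(1+16\lambda d/D)\mathsf I$. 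Since $\log$ is operator monotone, the right-hand side is at most $(1+16\lambda d/D)\Tr e^{\lambda\mathsf H_r^{\mathrm{start}}}$. Applying Markov's inequality to the nonnegative excess $\Tr e^{\lambda(\mathsf H_r^{\mathrm{start}}+\mathsf H_r^{\mathrm{raw}})}-\Tr e^{\lambda \mathsf H_r^{\mathrm{start}}}$, and using $e^x-1\ge x$, I get
\begin{equation}
  \prob{\text{segment } r \text{ bad}\mid\text{past blocks}}
  \le\frac{16\lambda d/D}{e^{128\lambda/R}-1}
  \le\frac{16\lambda d}{D}\cdot\frac{R}{128\lambda}=\frac{\Tact}{8D}.
\end{equation}

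\emph{Step 3 (union bound).} Badness of segments $1,\dots,r-1$ is measurable with respect to the past blocks of segment $r$. The chain rule therefore gives, for any fixed pattern with at least $R/2$ bad segments, a probability at most $(\Tact/8D)^{R/2}$. Summing over at most $2^R$ patterns yields
\begin{equation}
  2^R\qty(\frac{\Tact}{8D})^{R/2}=\qty(\frac{\Tact}{2D})^{R/2},
\end{equation}
which is the claim.

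The main obstacle is Step 2. Lemma~\ref{lem:local-covariance-moment} controls only $e^{\lambda\mathsf H_r^{\mathrm{raw}}}$ in isolation, whereas the test involves $\mathsf H_r^{\mathrm{start}}+\mathsf H_r^{\mathrm{raw}}$, which do not commute. Lieb's concavity theorem is the tool I would use to decouple them. It is also essential that $\mathsf H_r^{\mathrm{start}}$ is measurable with respect to the same conditioning as in Lemma~\ref{lem:local-covariance-moment}, which excludes the intervening mixing segment.
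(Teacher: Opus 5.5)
Your proof is correct. Its skeleton is the paper's: the reduction to segments containing a rejection, the Markov bound $\frac{16\lambda d/D}{e^{128\lambda/R}-1}\le \Tact/(8D)$, and the $2^R$ union bound combined with the chain rule all appear there in the same form.

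The one genuine difference is the decoupling step you identified as the main obstacle. The paper does not use Lieb's concavity theorem. It applies the two-matrix Golden--Thompson inequality,
\[
  \Tr e^{\lambda(\mathsf H_r^{\mathrm{start}}+\mathsf H_r^{\mathrm{raw}})}\le \Tr\bigl(e^{\lambda\mathsf H_r^{\mathrm{start}}}e^{\lambda\mathsf H_r^{\mathrm{raw}}}\bigr).
\]
The conditional expectation then passes inside the trace by linearity. Since $e^{\lambda\mathsf H_r^{\mathrm{start}}}\succeq 0$ and $\EX{e^{\lambda\mathsf H_r^{\mathrm{raw}}}}\preceq(1+16\lambda d/D)\mathsf I$, this gives the same factor $1+16\lambda d/D$ that you obtain.

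Your Lieb--Jensen argument, combined with operator monotonicity of $\log$, is valid and yields an identical bound. It is heavier than necessary, though. Lieb's theorem earns its keep when many random increments with non-scalar moment bounds must be chained without loss, as in Tropp's matrix Laplace-transform method. Here the block-by-block conditioning leaves one fixed matrix and one random matrix, and the bound from Lemma~\ref{lem:local-covariance-moment} is a scalar multiple of $\mathsf I$. So Golden--Thompson loses nothing, and Lieb's theorem buys nothing extra.

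In your favor, your Step~1 makes explicit the Loewner-order monotonicity of $\Phi$. The paper uses this only tacitly when it asserts that $\Delta_r\le 128/R$ forces every partial sum within the segment to pass the test. A minor terminological point: $\Phi$ is scalar-valued, so ``monotone in the Loewner order'' is the accurate phrase, rather than ``operator monotone''.
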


\begin{proof}
Let
\begin{equation}
  \Delta_r=\Phi(\mathsf H_r^{\mathrm{start}}+\mathsf H_r^{\mathrm{raw}})
    -\Phi(\mathsf H_r^{\mathrm{start}}).
\end{equation}
By Golden--Thompson inequality \cite{Golden1965,Thompson1965},
\begin{align}
  e^{\lambda\Delta_r}
  =\frac{\Tr e^{\lambda(\mathsf H_r^{\mathrm{start}}
      +\mathsf H_r^{\mathrm{raw}})}}
          {\Tr e^{\lambda\mathsf H_r^{\mathrm{start}}}} 
  \le
    \frac{\Tr\qty(e^{\lambda\mathsf H_r^{\mathrm{start}}}
      e^{\lambda\mathsf H_r^{\mathrm{raw}}})}
         {\Tr e^{\lambda\mathsf H_r^{\mathrm{start}}}}.
\end{align}
Since $\mathsf H_r^{\mathrm{start}}$ is fixed under the conditional expectation,
\cref{eq:local-block-exponential-increment} therefore gives
\begin{align}
  \EX*[][\text{past blocks}]{e^{\lambda\Delta_r}-1}
  \le \frac{16\lambda d}{D}.
\end{align}

If $\Delta_r\le128/R$, every partial sum will satisfy the constraint in \cref{eq:local-covariance-selector}, so no gate in the segment will be rejected.
Markov's inequality therefore gives
\begin{equation}
  \prob*{\text{some $b_j=0$ in segment $r$}\mid\text{past blocks}}
  \le\frac{16\lambda d/D}{e^{128\lambda/R}-1}
  \le\frac{\Tact}{8D}.
\end{equation}
If $N_b<\Tact/2$, at least half of the $R$ segments must contain a rejection.
The union bound and chain rule therefore give
\begin{equation}
  \prob*{N_b<\frac{\Tact}{2}}
  \le2^R\qty(\frac{\Tact}{8D})^{R/2}
  =\qty(\frac{\Tact}{2D})^{R/2}.
\end{equation}
\end{proof}

\subsubsection{Analysis of the second selector}

After applying both tests to all gates, we write $\mathsf H_T$ and $\mathsf S_T$ for the final values of the running sums.
Denote $K=\Tr\mathsf S_T=\sum_ja_j$, which is the total number of gates that pass both tests.

By definition of the first selector, each active segment increases $\Phi$ by at most $128/R$, so the final sum after checking all gates satisfies $\Phi(\mathsf H_T)\leq128$ (note that $\Phi(\mathsf H)=0$ initially). 
This implies:
\begin{equation}\label{eq:local-covariance-load}
  \mathsf H_T\preceq \tilde B\mathsf I,
  \qquad\text{where~}
  \tilde B=128+\frac{\log D}{\lambda}=O(1).
\end{equation}

The next lemma shows that the second selector also preserves a uniform bound of $\mathsf S$, and rejects only a small number of gates with high probability.
\begin{lemma}\label{lem:local-global-selection}
After the selection algorithm, we have
\begin{equation}\label{eq:local-pointwise-load}
  \mathsf S_T\preceq B\mathsf I_{\mathfrak g_n},
\end{equation}
where $B=16(1+\tilde B)=O(1)$.
Moreover,
\begin{equation}\label{eq:local-common-bad-probability}
  \prob*{K<\frac{\Tact}{4}}
  \le\qty(\frac{\Tact}{2D})^{R/2}+e^{-\Tact/16}.
\end{equation}
\end{lemma}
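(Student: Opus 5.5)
The plan has two parts: a deterministic invariant that gives the pointwise bound \cref{eq:local-pointwise-load}, and a probabilistic count of second-test rejections that is combined with \cref{lem:local-covariance-count}.

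\textbf{Pointwise bound.} Write $\mathsf M=16(\mathsf I+\mathsf H)-\mathsf S$ for the running matrix used in the second test. I will show by induction over the gates that
\begin{equation}
  \mathsf M\succ0,\qquad\text{equivalently}\qquad \mathsf S\prec16(\mathsf I+\mathsf H).
\end{equation}
\begin{enumerate}
  \item Initially $\mathsf M=16\mathsf I\succ0$.
  \item A first-test update $\mathsf H\leftarrow\mathsf H+b_j\mathsf C_j$ with $\mathsf C_j\succeq0$ can only increase $\mathsf M$.
  \item A second-test acceptance happens only when $\ketbra{X_j}\preceq\tfrac12\mathsf M$. Then the new matrix satisfies $\mathsf M-\ketbra{X_j}\succeq\tfrac12\mathsf M\succ0$.
\end{enumerate}
Hence at the end $\mathsf S_T\preceq16(\mathsf I+\mathsf H_T)$. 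Combining with \cref{eq:local-covariance-load}, $\mathsf H_T\preceq\tilde B\mathsf I$, gives $\mathsf S_T\preceq16(1+\tilde B)\mathsf I=B\mathsf I$.

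\textbf{Counting rejections.} On the event $N_b\ge\Tact/2$, the event $K<\Tact/4$ forces at least $\Tact/4$ gates with $b_j=1$ but $a_j=0$. By \cref{lem:local-covariance-count} it therefore suffices to show
\begin{equation}
  \prob*{\#\{j:b_j=1,\ a_j=0\}\ge\Tact/4}\le e^{-\Tact/16}.
\end{equation}
The relevant conditional expectation is computable. Conditional on $\mathcal F_{j-1}$ and $b_j=1$, the matrix $\mathsf M'_j$ (the value of $\mathsf M$ after the $\mathsf H$-update) is $\mathcal F_{j-1}$-measurable. Since $\EX*[][\mathcal F_{j-1}]{\ketbra{X_j}}=\mathsf C_j$,
\begin{equation}
  \EX*[][\mathcal F_{j-1}]{q_j}=\Tr\bigl(\mathsf M_j'^{-1}\mathsf C_j\bigr),
\end{equation}
and Markov's inequality gives $\prob{a_j=0\mid\mathcal F_{j-1},b_j=1}\le2\Tr(\mathsf M_j'^{-1}\mathsf C_j)$. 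If each such conditional rejection probability is at most a small absolute constant $p\le1/8$, the chain-rule and union-bound argument of \cref{prop:selector} applies. Alternatively, a Freedman/Chernoff bound for the martingale of rejection indicators would give $e^{-\Tact/16}$ once the constants are tuned.

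\textbf{Main obstacle.} The hard step is bounding $\Tr(\mathsf M_j'^{-1}\mathsf C_j)$. One route is a per-step bound, which would need a uniform lower bound such as $\mathsf M'\succeq c(\mathsf I+\mathsf H)$. The induction above does not supply this: acceptances can halve $\mathsf M$ in a given direction repeatedly. So I plan to use an amortized, online-row-sampling style potential argument as in \cite{CohenMuscoPachocki2020}. I would try $\Psi=\log\det\mathsf M$ first.
\begin{enumerate}
  \item Each $\mathsf H$-update raises $\Psi$ by at least $16\Tr(\mathsf M_j'^{-1}\mathsf C_j)=16\,\EX*[][\mathcal F_{j-1}]{q_j}$.
  \item Each acceptance lowers $\Psi$ by $-\log(1-q_j)\le2\log2\cdot q_j$, because accepted gates have $q_j\le1/2$.
  \item So $\Psi$ has nonnegative drift, and $\sum_j\Tr(\mathsf M_j'^{-1}\mathsf C_j)$ is controlled by the total growth of $\Psi$. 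That growth is at most $\log\det\bigl(16(1+\tilde B)\mathsf I\bigr)-\log\det(16\mathsf I)$.
\end{enumerate}
The difficulty is that this naive range is of order $D$, which is far larger than $\Tact$ when $T\ll D$. To fix this I would localize the potential to the $\mathsf H$-directions, that is, measure $\mathsf M$ relative to $16\mathsf I$ only on the range of $\mathsf H_T$, whose trace is at most $\Tact/d$. Candidates are the weighted trace potential $\Tr\log\bigl(\mathsf I+\mathsf H-\mathsf S/16\bigr)$, or a trace potential $\Tr(\mathsf M^{-1})$ in the style of Batson--Spielman--Srivastava. The aim is an amortized bound $\sum_j\Tr(\mathsf M_j'^{-1}\mathsf C_j)\le\Tact/32$, with a concentration version strong enough to give the $e^{-\Tact/16}$ tail.
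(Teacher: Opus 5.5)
Your proof of the pointwise bound is correct and is the paper's argument. The gap is in the rejection count, at exactly the step you flag as the main obstacle. You bound the total growth of $\Psi=\log\det\mathsf M$ through $\mathsf H_T\preceq\tilde B\mathsf I$, get something of order $D$, and leave the argument open. No new or localized potential is needed, because the trace bound already localizes. Since $\mathsf M_T\preceq 16(\mathsf I+\mathsf H_T)$ and $\log(1+x)\le x$ on eigenvalues,
\[
\log\det\mathsf M_T-\log\det(16\mathsf I)\le\log\det(\mathsf I+\mathsf H_T)\le\Tr\mathsf H_T=N_b .
\]
The last equality holds because every $\mathsf C_j$ has trace exactly $1$, so $\Tr\mathsf H_T$ equals $N_b$; it is not at most $\Tact/d$ as you wrote. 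Write $\delta_j$ for the log-det increase caused by the $\mathsf H$-update. Each acceptance lowers $\log\det\mathsf M$ by $-\log(1-q_j)\le\log 2$, which gives the pathwise bound $\sum_{j:b_j=1}\delta_j\le N_b+K\log 2$. Combine this with your per-step inequality $\mathbb E[q_j\mid\mathcal F_{j-1}]\le\delta_j/16$ and Markov's inequality. You get the compensator bound $\sum_j p_j\le\frac18(N_b+K\log 2)$ for $p_j=\mathbb P[b_j=1,\ a_j=0\mid\mathcal F_{j-1}]$. This is the missing ingredient.

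You also do not carry out the concentration step, and the event you reduce to is the wrong one. The compensator bound involves $N_b$ and $K$ themselves. So bounding $\mathbb P[N_b-K\ge\Tact/4]$ on its own asks for more than this argument delivers: for $N_b\approx\Tact$ it does not give the rate $e^{-\Tact/16}$. The paper instead uses that $\prod_j e^{-p_j}2^{b_j-a_j}$ has expectation at most $1$, since $\mathbb E[2^{b_j-a_j}\mid\mathcal F_{j-1}]=1+p_j\le e^{p_j}$. It then applies Markov's inequality on the joint event $\{K<\Tact/4,\ N_b\ge\Tact/2\}$, on which
\[
(\log 2)(N_b-K)-\textstyle\sum_j p_j\ge(\log 2-\frac18)N_b-\frac98(\log 2)K\ge\Tact/16 .
\]
A union bound with the first-selector count lemma then gives the stated estimate. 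A Freedman-type inequality, as you suggest, could replace the supermartingale, but only once the pathwise compensator bound above is available.
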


\begin{proof}
Write $\mathsf M=16(\mathsf I+\mathsf H)-\mathsf S$.
We claim that $\mathsf M\succ \mathsf 0$ throughout the construction.
Initially $\mathsf M=16\mathsf I$.
If a gate $j$ does not pass the first test, then $\mathsf M$ remains invariant.
If a gate $j$ passes the first test but not the second, then $\mathsf M\longleftarrow \mathsf M + 16\mathsf C_j$, and is still positive.
If a gate $j$ passes both tests, then by definition,
\begin{equation}
  \mathsf M_j + 16\mathsf C_j \succ \ketbra{X_j}
\end{equation}
(here $\mathsf M_j$ is evaluated before the first update), 
hence $\mathsf M$ is still positive after two updates.
Therefore, using \cref{eq:local-covariance-load} and $\mathsf{M}\succ \mathsf 0$, $\mathsf S \preceq 16(\mathsf I+\mathsf H) \preceq 16(1+\tilde B)\mathsf I$, which is \cref{eq:local-pointwise-load}.

It remains to bound the number of rejections in the second test.
For $b_j=1$, the first test update increases $\log\det\mathsf M$ by
\begin{equation}
  \delta_j=\log\det(\mathsf M+16\mathsf C_j)-\log\det\mathsf M,
\end{equation}
where $\mathsf M$ is evaluated before the update. 
The conditional covariance identity gives
\begin{align}
  \EX*[][\mathcal F_{j-1}]{q_j}
  &=\Tr\qty[\mathsf C_j(\mathsf M+16\mathsf C_j)^{-1}]\\
  &\le\frac1{16}\int_0^{16}\Tr\qty[\mathsf C_j(\mathsf M+t\mathsf C_j)^{-1}]\,dt
  =\frac{\delta_j}{16}.
\end{align}
Let $p_j=\EX*[][\mathcal F_{j-1}]{b_j-a_j}$.
Recall that $b_j$ is determined by $\mathcal F_{j-1}$.
If $b_j=1$, Markov's inequality gives $p_j=\prob*{q_j>1/2\mid\mathcal F_{j-1}}\le \delta_j/8$; if $b_j=0$ then $p_j=0$ as well.

Let us bound $\sum_j p_j$ by bounding the accumulated log-determinant increments.
Write $\mathsf M_j^+=\mathsf M+16\mathsf C_j$.
Before and after the second test, an accepted direction ($q_j\le1/2$) can decrease the log-determinant by at most $\log2$, since
\begin{equation}
  \log\det(\mathsf M_j^+-\ketbra{X_j}) = \log \det(\mathsf M_j^+) + \log(1-q_j)
  \ge \log \det(\mathsf M_j^+) - \log 2.
\end{equation}
Summing over all updates, with $\mathsf M_T$ denoting its final value, yields
\begin{equation}
  \log\det\mathsf M_T - \log\det 16\mathsf I
  \geq
  \sum_{j:b_j=1}\delta_j - K\log 2.
\end{equation}
Using $\mathsf M\preceq16(\mathsf I+\mathsf H)$ at the end, we obtain
\begin{equation}
  \sum_{j:b_j=1}\delta_j
  \le\log\det(\mathsf I+\mathsf H)+K\log2
  \le N_b+K\log2.
\end{equation}
Here we used $\log\det(\mathsf I+\mathsf H)\le\Tr\mathsf H=N_b$.
Consequently, on every path,
\begin{equation}\label{eq:local-rejection-compensator}
  \sum_jp_j\le\frac18(N_b+K\log2).
\end{equation}

The total number of rejections by the second selector is $N_b-K=\sum_j(b_j-a_j)$.
Since each summand is Bernoulli with conditional mean $p_j$,
\begin{equation}
  \EX*[][\mathcal F_{j-1}]{e^{-p_j}2^{b_j-a_j}}
  =e^{-p_j}(1+p_j)\le1.
\end{equation}
Iterating conditional expectations therefore gives
\begin{equation}
  \EX*{\exp\big( (\log2)(N_b-K)-\sum_jp_j\big)}
  =
  \EX*{\prod_j e^{-p_j}2^{b_j-a_j}}\le1.
\end{equation}
Note that, on the event $\{K<\Tact/4,\ N_b\ge \Tact/2\}$, \cref{eq:local-rejection-compensator} gives
\begin{equation}
  (\log2)(N_b-K)-\sum_jp_j
  \ge\qty(\log2-\frac18)N_b-\frac98(\log2)K
  \ge\frac{\Tact}{16}.
\end{equation}
Therefore, Markov's inequality now gives
\begin{equation}
  \prob*{K<\Tact/4,\ N_b\ge \Tact/2}\le \exp(-\Tact/16).
\end{equation}
Combining this with \cref{eq:local-covariance-count} by the union bound proves \cref{eq:local-common-bad-probability}.
\end{proof}

\subsection{Proof of \texorpdfstring{\cref{thm:local-setwise-anticoncentration}}{the local setwise small-ball theorem}}

\begin{proof}

We first assume $m+d$ divides $T$. 
By \cref{lem:local-global-selection}, the two-step selection gives $\mathsf S_T\preceq B\mathsf I$ with $B=O(1)$, and each $a_j$ is independent of $G_j$ conditional on its support.
The event $\{K<\Tact/4\}$ is determined entirely by the circuit path and the two selectors; it does not depend on the target set $\mathcal V$.

We apply \cref{lem:local-stationary-small-ball} with $k=\Tact/4$, choosing $c_0$ small enough that $\log\abs{\mathcal V}\le c_0T\le k/(64B)$, and use \cref{lem:local-global-selection}.
Since $\Tact=\Theta(T)$ and $R=\Theta(T/n^2)$, we obtain
\begin{equation}
  \prob*{\min_{V\in\mathcal V}d_2(U_T,V)\le1}
  \le 
  % \Cexp\qty[-c\frac{T}{n^2}\qty(1+\log\frac{D}{T})],
\exp\qty[-\Omega\qty(\frac{T}{n^2}(1+\log\frac{D}{T}))],
\end{equation}
after absorbing the $e^{-\Omega(T)}$ term and adjusting the constants.

If $T>m+d$ but $m+d$ does not divide $T$, we apply the above argument to the first $\lfloor T/(m+d)\rfloor$ blocks and adjust the constants.

Finally, suppose $T=O(n^2)$.
Set $\ell=\min\{T,d\}$. 
We regard the first $\ell$ gates as active and the rest as ``junk''.
We ignore the first test (simply set $b_j=1$ for all $j\leq \ell$) and apply the second test to the active gates, setting $a_j=0$ for $j>\ell$.

Since $\mathsf H=\sum_{j=1}^{\ell}\mathsf C_j\preceq(\ell/d)\mathsf I\preceq\mathsf I$, the same positive-definiteness argument for $\mathsf M$ gives
\begin{equation}
  \mathsf S\preceq32\mathsf I.
\end{equation}
The log-determinant and the exponential estimate in the proof of \cref{lem:local-global-selection}, with $N_b=\ell$ and $K<\ell/2$, now give
\begin{equation}
  \prob{K<\ell/2}\le e^{-\ell/8}.
\end{equation}
Applying \cref{lem:local-stationary-small-ball} with $k=\ell/2$ and $B=32$ yields, provided $\log\abs{\mathcal V}\le\ell/2^{12}$,
\begin{equation}
  \prob*{\min_{V\in\mathcal V}d_2(U_T,V)\le1}
  \le 2e^{-\ell/8}+e^{-\ell/2^{12}}.
\end{equation}
Since $\ell=\Theta(T)$, the desired result holds after adjusting the constants.
\end{proof}

\section{Complexity growth}\label{sec:complexity}

\begin{theorem}
  Take $\epsilon=1/2$. There exists an absolute constant $c>0$ such that for all $2\leq T\leq D$
  \begin{equation}
    \prob*{\mathcal C_\epsilon(U_T)\le c\frac{T}{\log T}}
\le \exp\qty[-\Omega\qty(\frac{T}{n^2}(1+\log\frac{D}{T}))].
 \end{equation}
  Furthermore, for $T=O(n^2)$, the r.h.s. can be improved to $\exp(-\Omega(T))$.
\end{theorem}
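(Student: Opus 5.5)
We derive the theorem from \cref{thm:local-setwise-anticoncentration} with an $\epsilon$-net. Let $s=\lfloor cT/\log T\rfloor$. If $\mathcal C_{1/2}(U_T)\le s$, then some circuit of exactly $s$ two-qubit gates (pad with identities) approximates $U_T$ to operator-norm error $1/2$. Such a circuit is described by a list of $s$ edges in $\binom{[n]}2$ and $s$ gates in $\mathbf U(4)$.

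\textbf{Building the net.} Fix a $\delta$-net $\mathcal N_4$ of $\mathbf U(4)$ in operator norm with $\delta=1/(4s)$; we may take $\abs{\mathcal N_4}\le(C/\delta)^{16}$. Replace each gate by its nearest net element. By telescoping, the operator-norm error accumulates to at most $s\delta=1/4$. Let $\mathcal V$ be the resulting set of circuits. Any $U$ with $\mathcal C_{1/2}(U)\le s$ then lies within operator-norm distance $3/4$ of some $V\in\mathcal V$, and hence $d_2(U,V)\le 3/4\le 1$. Therefore
\begin{equation}
  \prob*{\mathcal C_{1/2}(U_T)\le s}\le\prob*{\min_{V\in\mathcal V}d_2(U_T,V)\le 1}.
\end{equation}
The net has size
\begin{equation}
  \log\abs{\mathcal V}\le s\qty(2\log n+16\log(4Cs)).
\end{equation}

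\textbf{Main obstacle: the $\log n$ term.} To apply the theorem we need $\log\abs{\mathcal V}\le c_0T$. The $\log s$ term is handled by the choice $s=cT/\log T$: we get $16\,s\log(4Cs)\lesssim cT$, which is at most $c_0T/2$ for small $c$. The edge-choice cost $2s\log n$ carries no $\log T$ to cancel it, so it is fine when $T\ge n^{O(1)}$ but not in general. This is the step where I expect to spend effort, and it splits into two regimes.

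\emph{Large $T$.} If $\log T\gtrsim\log n$, for instance $T\ge n^2$, then $s\log n\lesssim cT$ and the condition holds directly. \Cref{thm:local-setwise-anticoncentration} then yields the bound $\exp[-\Omega(\frac{T}{n^2}(1+\log\frac DT))]$.

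\emph{Small $T$.} If $T<n^2$, a circuit of $s$ gates touches at most $2s$ qubits, and I would instead use a light-cone argument. My first attempt is a direct bound. The random circuit with $T$ gates touches, with overwhelming probability, about $\min(2T,n)$ distinct qubits, because repeated edges are rare when $T\ll n^2$. An operator acting nontrivially only on a set $A$ of $2s\ll T$ qubits has small $d_2$-overlap with $U_T$ unless $U_T$ nearly factorizes. So for each fixed support $A$, one bounds the probability that $U_T$ is $1$-close in $d_2$ to $\mathbf U(2^{\abs A})\otimes I$. This again reduces to the small-ball machinery, now applied to a net whose edge choices are restricted to $A$. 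Concretely, I would union bound over the choice of $A$, at cost $\binom{n}{2s}\le e^{2s\log n}$, which is the same problem again.

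The alternative I would actually pursue in this regime is the ``more direct proof for short time complexity growth'' mentioned in the introduction. Gates of the random circuit that act on previously untouched pairs each contribute an independent Haar factor on fresh qubits. A fixed $V$ of complexity $s$ overlaps a Haar random product of $\Omega(T)$ disjoint $\mathbf U(4)$ factors, after averaging over the remaining structure, by at most $e^{-\Omega(T)}$. In this approach, the net over the gates of $V$ only costs $s\log s$, while the count over supports can be absorbed because $T\gg s\log n$ holds whenever $c$ is small and $T\ge\log n$ roughly. The case of tiny $T$ is trivial, since $s<1$ there.

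\textbf{Boundary cases.} The $\exp(-\Omega(T))$ improvement for $T=O(n^2)$ follows from the improved clause of \cref{thm:local-setwise-anticoncentration}, provided the net size satisfies its proportionality constraint, or from the direct argument above. For $2\le T$ below an absolute constant, the claimed bound is vacuous after adjusting constants, because $cT/\log T<1$ and $\mathcal C_{1/2}(U_T)\ge1$ with probability one.
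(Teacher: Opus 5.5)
\textbf{Large $T$.} Your large-$T$ step is the paper's argument: a circuit net combined with the setwise small-ball theorem. It already applies once $T\ge c_0 n$, because then $\log n\lesssim\log T$, so you do not need $T\ge n^2$. In the other direction, your fresh-qubit picture cannot serve for $n\ll T<n^2$, since only $O(n)$ gates can land on untouched qubits. The regimes should therefore be split at $T\asymp n$, as the paper does.

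\textbf{The gap: short times.} You claim the count over supports is absorbed because $T\gg s\log n$ ``whenever $T\ge\log n$ roughly''. This is a miscalculation. With $s=cT/\log T$ one has $s\log n=cT\log n/\log T$, which is $O(T)$ only when $\log T\gtrsim c\log n$, i.e.\ $T\ge n^{\epsilon}$ with $\epsilon$ tied to the absolute constant $c$. For example, at $T=\log n$ one gets $s\log n=c\log^2 n/\log\log n\gg T$. There the $e^{\Theta(s\log n)}$ support choices in your union bound swamp an $e^{-\Omega(T)}$ small-ball estimate. Your ``tiny $T$'' case only covers $T$ below an absolute constant. So $T$ between an absolute constant and $n^{\epsilon}$ is not handled by either branch of your proposal. (A minor point: $\mathcal C_{1/2}(U_T)\ge1$ does not hold almost surely, since the empty circuit suffices when $\norm{U_T-I}_{\mathrm{op}}\le1/2$. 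This is harmless for bounded $T$.)

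\textbf{The missing idea.} Turn your light-cone observation into a witness about $U_T$ alone that holds for all approximants simultaneously, so that no union bound is needed. The paper does this for $2\le T\le c_0 n$ as follows.
\begin{enumerate}
\item Each new support hits an already-used qubit with probability at most $4c_0$. So with probability $1-e^{-\Omega(T)}$, at least $T$ qubits are touched exactly once. One can then pick $T/2$ of them whose unique incident gates are distinct.
\item For such a qubit $a$, unitary invariance gives $\norm{[U_T,Z_a]}_{\mathrm{op}}=\norm{[G_{j(a)},Z_a]}_{\mathrm{op}}$. This exceeds $1$ with a constant probability, independently across the distinct Haar gates. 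A Chernoff bound then yields $\Omega(T)$ such qubits with failure probability $e^{-\Omega(T)}$.
\item Let $W$ satisfy $\norm{U_T-W}_{\mathrm{op}}\le1/2$. If $W$ does not touch such a qubit $a$, then $[W,Z_a]=0$. Hence $\norm{[U_T,Z_a]}_{\mathrm{op}}\le2\norm{U_T-W}_{\mathrm{op}}\le1$, a contradiction.
\item So every $1/2$-approximant must touch $\Omega(T)$ qubits and therefore uses $\Omega(T)$ gates.
\end{enumerate}
This gives $\prob{\mathcal C_{1/2}(U_T)\le cT}\le e^{-\Omega(T)}$ with no $\log n$ loss. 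It covers both the main estimate and the $e^{-\Omega(T)}$ clause in this range.
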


The proof is obtained by combining the two propositions below.
For $T\ge c_0n$, where $c_0$ is the absolute constant in \cref{prop:local-short-complexity}, we use \cref{cor:local-complexity-growth}, which follows from the small-ball estimate.
For $2\le T\le c_0n$, \cref{prop:local-short-complexity} gives a linear lower bound with failure probability $Ce^{-cT}$, which implies the claimed estimate after adjusting the constants.

\subsection{Large time}

The setwise small-ball estimate in \cref{sec:all-to-all-local} gives the following complexity lower bound.

\begin{proposition}\label{cor:local-complexity-growth}
Fix $c_0>0$ and take $\epsilon=1/2$. There exist constants $c,C>0$, depending only on $c_0$, such that for all $\max\{2,c_0n\}\leq T\leq D$
\begin{equation}\label{eq:local-complexity-growth}
  \prob*{\mathcal C_\epsilon(U_T)\le c\frac{T}{\log T}}
\le \exp\qty[-\Omega\qty(\frac{T}{n^2}(1+\log\frac{D}{T}))].
\end{equation}
Furthermore, for $T=O(n^2)$, the r.h.s. can be improved to $\exp(-\Omega(T))$.
\end{proposition}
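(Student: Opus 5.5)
The approach is an $\epsilon$-net argument over short circuits, combined with \cref{thm:local-setwise-anticoncentration}. Set $s=\lfloor cT/\log T\rfloor$. If $\mathcal C_{1/2}(U_T)\le s$, then there is a circuit $W$ of at most $s$ arbitrary two-qubit gates with $\norm{U_T-W}_{\mathrm{op}}\le 1/2$. Hence $d_2(U_T,W)\le 1/2$. We discretize $W$ so that the event becomes $\min_{V\in\mathcal V}d_2(U_T,V)\le 1$ for a finite set $\mathcal V$ of size $e^{O(s\log s)}$, or even $e^{O(s\log n + s\log s)}$.

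\textbf{Step 1 (net).} Every such $W$ is specified by a sequence of at most $s$ pairs $e_i\in\binom{[n]}2$, giving at most $(n^2)^s$ choices (padding with identity gates to length exactly $s$), together with gates $g_i\in\mathbf U(4)$. Fix a $\delta$-net $\mathcal N_\delta$ of $\mathbf U(4)$ in operator norm, of size $(C/\delta)^{16}$. Replacing each $g_i$ by its nearest net point changes the product by at most $s\delta$ in operator norm, by the telescoping/triangle inequality. Choosing $\delta=1/(2s)$ gives a set $\mathcal V$ with
\begin{equation}
  \abs{\mathcal V}\le n^{2s}(2Cs)^{16s},\qquad \min_{V\in\mathcal V}d_2(U_T,V)\le \tfrac12+\tfrac12=1
\end{equation}
on the event $\{\mathcal C_{1/2}(U_T)\le s\}$. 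Thus $\log\abs{\mathcal V}\le 2s\log n+16s\log(2Cs)$.

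\textbf{Step 2 (checking the size constraint).} We need $\log\abs{\mathcal V}\le c_0'T$, where $c_0'$ is the constant of \cref{thm:local-setwise-anticoncentration}. The $s\log s$ term is at most $cT$ times a constant, since $s\le cT/\log T$. The $s\log n$ term is the main obstacle: it is at most $cT\log n/\log T$, which is $O(cT)$ only when $\log n\lesssim\log T$. This is exactly where the hypothesis $T\ge c_0 n$ is used. It gives $\log T\ge\log n-O(1)$ (with constants depending on $c_0$, and using $T\ge2$ to handle small $n$), so $\log n\le C'\log T$. Hence $\log\abs{\mathcal V}\le C''cT$, and taking $c$ small relative to $c_0'$ and $c_0$ suffices.

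\textbf{Step 3 (conclusion).} Applying \cref{thm:local-setwise-anticoncentration} to $\mathcal V$ (note $T\le D$) bounds the probability by $\exp[-\Omega(\frac{T}{n^2}(1+\log\frac DT))]$. For $T=O(n^2)$, the improved clause of that theorem requires $\log\abs{\mathcal V}=O(T)$ with a sufficiently small constant, which again holds after shrinking $c$; this yields $e^{-\Omega(T)}$. Rounding issues are trivial: if $s=0$, the event $\mathcal C_{1/2}(U_T)\le 0$ is $\norm{U_T-I}_{\mathrm{op}}\le 1/2$, which is covered by $\mathcal V=\{I\}$.
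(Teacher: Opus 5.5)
Your proposal is correct and follows the paper's proof essentially step for step. Both arguments use an operator-norm $\epsilon/r$-net of $\mathbf U(4)$ and the telescoping bound to get a circuit net of log-size $O(r(\log n+\log r))$, then the triangle inequality to reach $d_2\le 1$. Both then use $T\ge\max\{2,c_0n\}$ to get $\log n\lesssim_{c_0}\log T$, and apply \cref{thm:local-setwise-anticoncentration} with $r=\lfloor cT/\log T\rfloor$, handling $r=0$ via $\{I\}$.
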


\begin{proof}
Let $\mathcal U_r\subset\mathbf U(N)$ be the set of unitaries implementable by at most $r$ two-qubit gates ($r\geq 1$).
For $r\ge1$, choose an operator-norm $\epsilon/r$-net of $\mathbf U(4)$ with at most $(cr)^{16}$ elements, where $c$ is an absolute constant.
Replacing each gate by a net point changes the product by at most $\epsilon$ in operator norm, by a telescoping sum.
This gives an operator-norm $\epsilon$-net $\mathcal V_r$ of $\mathcal U_r$ satisfying
\begin{equation}\label{eq:local-circuit-net}
  \log\abs{\mathcal V_r}
  \le r\log\qty(\binom{n}{2}(cr)^{16})
  \le Cr\bigl(\log n+\log r\bigr).
\end{equation}
For $r=0$, take $\mathcal V_0=\{I\}$.

If $\mathcal C_\epsilon(U_T)\le r$, the triangle inequality and $d_2(U,V)\le\norm{U-V}_{\mathrm{op}}$ imply
\begin{equation}
  \min_{V\in\mathcal V_r}d_2(U_T,V)\le 2\epsilon = 1.
\end{equation}
Applying \cref{thm:local-setwise-anticoncentration} to $\mathcal V_r$ and using \cref{eq:local-circuit-net}, we obtain absolute constants $a,b,C>0$ such that, whenever $Cr(\log n+\log(1+r))\le bT$,
\begin{equation}\label{eq:local-complexity-tail}
  \prob*{\mathcal C_\epsilon(U_T)\le r}
  \le C\exp\qty[-a\frac{T}{n^2}(1+\log\frac{D}{T})].
\end{equation}
Since $T\ge c_0n$ and $T\ge2$, we have $\log(nT)\le C_{c_0}\log T$.
Choose $r=\lfloor cT/\log T\rfloor$, where $c>0$ is sufficiently small depending only on $c_0$, so that $Cr(\log n+\log(1+r))\le bT$.
This proves \cref{eq:local-complexity-growth}.

For $T=O(n^2)$, the improved estimate in \cref{thm:local-setwise-anticoncentration}, together with the same choice of $r$ and a smaller $c$ if necessary, improves the right-hand side to $\exp(-\Omega(T))$.
\end{proof}

\subsection{Short time}

For sufficiently short circuits, support collisions are rare, so many qubits are touched exactly once. Counting these qubits gives a linear complexity lower bound.

\begin{proposition}\label{prop:local-short-complexity}
There exists an absolute constant $c_0>0$ such that, for $2\le T\le c_0n$,
\begin{equation}
  \prob*{\mathcal C_{1/2}(U_T)\le cT}\le \exp(-\Omega(T)).
\end{equation}
\end{proposition}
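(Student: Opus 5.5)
We plan to prove a lower bound on how many qubits any $1/2$-approximating circuit must touch, and compare it with how many qubits the random circuit touches nontrivially. For $T\le c_0 n$ the supports $e_1,\dots,e_T$ are nearly disjoint. Call a qubit \emph{singly touched} if it lies in exactly one of the $e_j$. Let $Q$ be the number of gates whose two qubits are both singly touched (isolated gates). The expected number of pairs $j<j'$ with $e_j\cap e_{j'}\neq\emptyset$ is at most $T^2\cdot 4/n\le 4c_0T$. A Chernoff-type bound on collisions will give $Q\ge T/2$ with probability $1-e^{-\Omega(T)}$ once $c_0$ is small. We can argue this sequentially: conditional on the past, gate $j$ hits one of at most $2T$ previously used qubits with probability $\le 4T/n\le 4c_0$. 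The number of colliding gates is therefore stochastically dominated by a $\mathrm{Bin}(T,4c_0)$ variable, and each collision spoils at most two gates.

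On this event, $U_T$ restricted to the isolated pairs is a tensor product $\bigotimes_{j\in J} G_j$ of independent Haar $\mathbf U(4)$ gates on disjoint pairs, with $|J|\ge T/2$. It is tensored with an operator on the remaining qubits. The key deterministic step is the following. If a circuit $V$ with $r$ gates satisfies $\|U_T-V\|_{\mathrm{op}}\le 1/2$, then $V$ touches at most $2r$ qubits. Hence for at least $|J|-2r$ of the isolated pairs $e_j$, both qubits are untouched by $V$, so $V$ acts as identity there. Write $U_T=G_j\otimes R$ for such a pair. The operator-norm closeness to $I_{e_j}\otimes V'$ forces $G_j$ to be within $O(1)$ operator distance of a scalar multiple of the identity. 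To see this, take the partial trace, or test with product states: $\|G_j\otimes R - I\otimes V'\|\le 1/2$ implies $\|G_j\psi\otimes R\phi-\psi\otimes V'\phi\|\le1/2$ for all unit $\psi,\phi$. Fixing $\phi$, the vectors $G_j\psi$ and $\psi$ are then nearly parallel up to a common phase for every $\psi$, so $\min_{\omega}\|G_j-\omega I\|_{\mathrm{op}}\le 1$ or similar. Here $\omega$ is a phase, and we must check that the phase is uniform in $\psi$; this follows by using superpositions of basis vectors.

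For a Haar $G\in\mathbf U(4)$, the event $\min_{|\omega|=1}\|G-\omega I\|_{\mathrm{op}}\le 1$ has probability $p<1$, an absolute constant. We expect this step to need the most care: we must choose the threshold consistently with the $1/2$ error so that $p$ is strictly below $1$. We plan to use the eigenvalue repulsion of Haar $\mathbf U(4)$, under which all four eigenvalues lie in an arc of length about $2\arcsin(1/2)$ with probability bounded away from $1$. The gates $G_j$, $j\in J$, are independent given the supports. Setting $r=cT$ with $c$ small, say $2r\le |J|/4$, the bad event requires at least $|J|/2\ge T/4$ of them to be near-scalar. By a binomial tail, the probability is at most $2^{|J|}p^{|J|/2}$ if $p<1/4$; in general we use a Chernoff bound with $c$ chosen so that $1-4c/1>p$, which is $e^{-\Omega(T)}$. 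Importantly, the approximating circuit $V$ ranges over a continuum. However, the argument only uses the set of qubits $V$ touches, and this enters solely through a union over which $\le 2r$ pairs are exempted. That union is controlled by the Chernoff bound on the count of near-scalar gates, so no net is required. Combining with the collision bound gives $\exp(-\Omega(T))$.
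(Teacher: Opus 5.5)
Your proposal is correct and follows essentially the same route as the paper. Both arguments use a sequential collision bound at $T\le c_0 n$, then the fact that an $r$-gate approximant touches at most $2r$ qubits and so must leave many once-touched sites untouched, then a Chernoff bound over the independent Haar gates at those sites, with no net needed. The only difference is the local witness: the paper uses single once-touched qubits $a$ and the test $\norm{[U_T,Z_a]}_{\mathrm{op}}>1$, which any $W$ not touching $a$ violates via $\norm{[U_T-W,Z_a]}_{\mathrm{op}}\le 2\norm{U_T-W}_{\mathrm{op}}$, whereas you need both qubits of a gate isolated and use a near-scalar test. Your deterministic step is cleanest by right-multiplying by $I\otimes R^\dagger$: the commuting normal operators then give $\norm{G_j\otimes I-I\otimes W}_{\mathrm{op}}=\max_{k,l}\abs{g_k-w_l}$, hence $\norm{G_j-\omega I}_{\mathrm{op}}\le 1/2$ for any eigenvalue $\omega$ of $W=V'R^\dagger$, with no superposition argument required.
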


\begin{proof}
Call the $j$-th gate a collision gate if its support $e_j$ intersects $e_1\cup\cdots\cup e_{j-1}$, and let $N_{\mathrm{col}}$ be the number of collision gates. Let $N_1$ be the number of qubits that belong to exactly one of the supports $e_1,\ldots,e_T$.
  A collision gate can reduce this count by at most 4, so
\begin{equation}
  N_1\ge 2T-4N_{\mathrm{col}}.
\end{equation}
Conditional on the first $j-1$ supports, the probability that $e_j$ intersects a previously used qubit is at most $4(j-1)/n\le4c_0$. The union bound and the chain rule give
\begin{equation}
  \prob{N_1 < T} \leq \prob{N_{\mathrm{col}}>T/4}
  \le2^T(4c_0)^{T/4}
  \le \exp(-\Omega(T))
\end{equation}
for sufficiently small $c_0$. 

Suppose $N_1\geq T$. We can then choose a set $\mathcal Q$ of $T/2$ qubits whose unique incident gates are all distinct (but such gates may collide at the other sides). 
For $a\in\mathcal Q$, denote its unique incident gate by $G_{j(a)}$. No other gates have support on $a$. Using unitary invariance of the operator norm, we have:
\begin{equation}
  \norm{[U_T,Z_a]}_{\mathrm{op}}
  =\norm{[G_{j(a)},Z_a]}_{\mathrm{op}},
\end{equation}
where $Z_a$ is the Pauli $Z$ operator on qubit $a$.
The gates $G_{j(a)}$ are Haar random, so $\norm{[G_{j(a)},Z_a]}_{\mathrm{op}}>1$ with an absolute probability $p>0$.
The Chernoff bound then shows that
\begin{equation}
  \prob*{\text{$\norm{[U_T,Z_a]}_{\mathrm{op}}>1$ for at least $\frac{pT}4$ qubits $a\in\mathcal Q$}} \geq 1-\exp(-\Omega(T)).
\end{equation}

If a circuit $W$ does not touch such a qubit $a$, then $[W,Z_a]=0$, so
\begin{equation}
  1<\norm{[U_T,Z_a]}_{\mathrm{op}}
  =\norm{[U_T-W,Z_a]}_{\mathrm{op}}
  \le2\norm{U_T-W}_{\mathrm{op}}.
\end{equation}
Thus every operator-norm $1/2$-approximation of $U_T$ must touch all these qubits, hence requires at least $pT/8$ gates.
\end{proof}

\section{Kac's random walk}\label{sec:Kac}

In this section, we study the complexity growth of Kac's random walk on $\mathbf{SO}(N)$ (recall $N=2^n$).
In Kac's random walk, we pick a random pair of coordinates $i,j\in [N]$ and then perform a random rotation in the $i$-$j$ plane \cite{Kac1956}.
We denote by $U_T$ the random variable given by the $T$-step Kac random walk.

The next theorem gives a small-ball estimate with an $\exp(-\Omega(T))$ error term for an exponentially large time range.
The argument is similar to \cref{prop:local-short-complexity}. The key observation is still that each step explores different directions.
\begin{theorem}\label{thm:kac-small-ball}
There exists an absolute constant $c_0>0$ such that, for $2\leq T\le c_0N$, 
\begin{align}
  \sup_{V\in\mathbf{SO}(N)}\prob*{\norm{U_T-V}_{\mathrm{op}}\le \frac12} \le \exp(-\Omega(T)).\label{eq:kac-small-ball}
\end{align}
\end{theorem}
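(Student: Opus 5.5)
Following the remark that the argument mirrors \cref{prop:local-short-complexity}, I will count \emph{coordinates touched exactly once} rather than qubits. Each Kac step picks a pair $\{i_t,j_t\}\subset[N]$ and applies a rotation $R_t(\vartheta_t)$ by a uniform angle in that plane. Call step $t$ a collision step if $\{i_t,j_t\}$ meets $\bigcup_{s<t}\{i_s,j_s\}$. Conditional on the past, this has probability at most $4(t-1)/N\le 4c_0$. The chain rule and a union bound over which steps collide then give $\prob{N_{\mathrm{col}}>T/4}\le 2^T(4c_0)^{T/4}=e^{-\Omega(T)}$ for small $c_0$. Off this event, at least $T/2$ steps are fresh and touch two untouched coordinates, and each collision can spoil at most a constant number of earlier fresh steps. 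So, after adjusting constants, we obtain a set $\mathcal S$ of at least $T/4$ steps $t$ whose pair $\{i_t,j_t\}$ is touched by no other step.

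Next I analyze $U_T$ on such isolated pairs. If $\{i_t,j_t\}$ is touched only by step $t$, then the coordinate subspace $E_t=\operatorname{span}\{e_{i_t},e_{j_t}\}$ is invariant under every step. Steps other than $t$ act as the identity on $E_t$, and their supports are disjoint from $E_t$. Hence $U_T=R_t(\vartheta_t)\oplus(\text{rest})$ with respect to $E_t\oplus E_t^\perp$, and the $2\times2$ block of $U_T$ on $E_t$ is exactly the rotation by $\vartheta_t$. Conditional on the pair sequence, the angles $(\vartheta_t)_{t\in\mathcal S}$ are i.i.d.\ uniform and independent of everything else.

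To get the small-ball bound, fix $V\in\mathbf{SO}(N)$. If $\norm{U_T-V}_{\mathrm{op}}\le\frac12$, then compressing to each $E_t$ gives $\norm{R_t(\vartheta_t)-P_tVP_t}_{\mathrm{op}}\le\frac12$, where $P_t$ is the projection onto $E_t$. The matrix $P_tVP_t$ is a fixed $2\times2$ matrix once the pair sequence is fixed. The set of angles $\vartheta$ with $\norm{R(\vartheta)-A}_{\mathrm{op}}\le\frac12$ lies inside an arc of length at most $2\cdot 2\arcsin(1/2)\cdot$const$<2\pi$. Indeed, any two such angles satisfy $\norm{R(\vartheta)-R(\vartheta')}\le1$, i.e.\ $2|\sin((\vartheta-\vartheta')/2)|\le1$, so $|\vartheta-\vartheta'|\le\pi/3$. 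Hence each event has conditional probability at most $p<1$, namely $p=1/6$ since the arc has length at most $\pi/3$. By independence across $t\in\mathcal S$, the probability is at most $p^{T/4}$. Adding the bad-collision probability gives $e^{-\Omega(T)}$ uniformly in $V$.

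The main obstacle is the combinatorial step of extracting $\Omega(T)$ fully isolated pairs. Qubits touched once are not enough here: both endpoints of the pair must be untouched by all other steps, including \emph{later} ones. I would handle this by also bounding the number of later steps that hit earlier coordinates. This is the same collision count, since a later step meeting an earlier pair is itself a collision step. Each collision step destroys isolation of at most two earlier steps, so the number of isolated steps is at least $T-3N_{\mathrm{col}}\ge T/4$ on the good event.
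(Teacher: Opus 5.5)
Your proof is correct, but it takes a somewhat different route from the paper's.

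\textbf{The paper's route.} The paper works with coordinates touched exactly once, not with fully isolated pairs. It reuses the bound $N_1\ge 2T-4N_{\mathrm{col}}$ from \cref{prop:local-short-complexity} and picks at least $T/2$ once-touched coordinates $a$ whose incident gates are distinct. It then observes that $(U_T)_{aa}=\cos\theta_{j(a)}$ exactly, since every other rotation acts trivially on coordinate $a$, whatever happens to the partner coordinate. Finally it applies the one-dimensional estimate $\prob{\abs{\cos\theta-x}\le 1/2}\le 1/2$ (from the arcsine law) to the $(a,a)$ entry of $U_T-V$.

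\textbf{A correction to your remark.} Your claim that once-touched coordinates ``are not enough'' is not accurate. They are not enough for your $2\times2$ block argument, but a single diagonal entry already suffices.

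\textbf{What your route costs and buys.} You pay for the full block with a stronger combinatorial requirement: both endpoints must be untouched by every other step. You handle this correctly via $T-3N_{\mathrm{col}}$, using two facts: fresh pairs are pairwise disjoint, and any later step meeting a fresh pair is itself a collision step. In exchange you get the whole rotation $R(\vartheta_t)$, which determines the angle itself rather than only $\cos\vartheta_t$ (with its $\pm\vartheta$ ambiguity). The small-ball step then becomes a purely metric argument, using the triangle inequality and $\norm{R(\vartheta)-R(\vartheta')}_{\mathrm{op}}=2\abs{\sin((\vartheta-\vartheta')/2)}$. This gives a better per-step constant ($1/6$ versus $1/2$) and avoids computing the arcsine distribution. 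Both routes give $e^{-\Omega(T)}$ uniformly in $V$ from the same collision bound.

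\textbf{A minor point.} Your first phrasing, ``arc of length at most $2\cdot 2\arcsin(1/2)\cdot$const'', is sloppy. The clean statement is that a subset of the circle with circular diameter at most $\pi/3<2\pi/3$ lies in an arc of length $\pi/3$. Alternatively, it lies in the arc of length $2\pi/3$ centered at any of its points, which already gives $p\le 1/3$.
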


\begin{proof}
  Fix the number of steps as $T$.
Let $N_1$ count the coordinates touched exactly once.
The argument in the proof of \cref{prop:local-short-complexity} applies here, giving, for sufficiently small $c_0$:
\begin{equation}
\prob{N_1<T}\le \exp(-\Omega(T)).
\end{equation}

For any fixed random walk history with $N_1\ge T$, we can choose a set $\mathcal Q\subset [N]$ of at least $T/2$ coordinates whose unique incident gates are all distinct.
For $a\in\mathcal Q$, write $j(a)$ for its unique incident gate and $\theta_{j(a)}$ for that gate's angle.
All other rotations fix coordinate $a$, so $(U_T)_{aa}=\cos\theta_{j(a)}$.

For any fixed $V\in\mathbf{SO}(N)$, the event $\norm{U_T-V}_{\mathrm{op}}\le1/2$ implies $\abs{\cos\theta_{j(a)}-V_{aa}}\le1/2$ for every $a\in\mathcal Q$.
Note that, for any fixed $x\in\mathbb R$, when $\theta$ is chosen uniformly at random from $[0,2\pi)$, we have:
\begin{equation}
  \prob*{\abs{\cos\theta-x}\le \frac12}\le \frac12.
\end{equation}
Moreover, conditional on the coordinates, the angles $\theta_{j(a)}$ are independent for different $a\in\mathcal Q$.
Therefore, 
\begin{equation}
  \prob*{\norm{U_T-V}_{\mathrm{op}}\le\frac12} \le 2^{-\abs{\mathcal Q}}\le2^{-T/2}.
\end{equation}
Adding the probability $\prob{N_1<T}\le e^{-\Omega(T)}$ of the excluded pair histories, we obtain
\begin{equation}
  \prob*{\norm{U_T-V}_{\mathrm{op}}\le\frac12}
  \le 2^{-T/2}+e^{-\Omega(T)}
  =e^{-\Omega(T)}.
\end{equation}
\end{proof}

Following the same $\epsilon$-net argument as in the proof of
\cref{cor:local-complexity-growth}, the above theorem implies
\begin{equation}
    \prob*{\mathcal C_{1/4}(U_T)\le \frac{cT}{\log n+\log T}} \le \exp(-\Omega(T)).\label{eq:kac-complexity}
\end{equation}
Here $\mathcal C_\epsilon(\cdot)$ remains defined as the minimum number of 2-local orthogonal gates (rather than, for example, the minimum number of steps in Kac's walk) required to approximate an element in $\mathbf{SO}(N)$.

On the other hand, it is known that Kac's random walk has a uniform spectral gap $\Omega(1/N)$ \cite{CarlenCarvalhoLoss2003}.
The spectral gap method, similar to that in \cref{sec:spectral-gap-complexity}, then gives
\begin{equation}
  \sup_{V\in\mathbf{SO}(N)}\prob*{\norm{U_T-V}_{\mathrm{op}}\le \frac12} \le \exp(-\Omega(\min\{\frac{T}{N},N^2\})).
\end{equation}
For the time range $T=O(N)$, the above estimate itself does not give a growing lower bound on the complexity.

\section*{Statement on AI use}
This work was developed through collaboration between the author and AI (ChatGPT 5.6 Sol).
The author conceived the overall proof strategy.
The technical lemmas in \cref{sec:all-to-all-local} were refined through extensive discussions with AI. In particular, AI proposed and analyzed the final form of the selectors used in that section, which are crucial for our work.
\appendix

\section{Complexity from spectral gap}\label{sec:spectral-gap-complexity}

In this appendix, we show how spectral gaps imply both small-ball bounds and circuit-complexity growth.
The implication is standard; see, e.g., \cite{BHH2016, BCHKP2021,CHHLMT}.

Recall that $N=2^n$ and $D=N^2-1$.
\begin{lemma}\label{prop:general-gap-complexity-appendix}
  Suppose a random walk has a spectral gap $0<\gamma\leq 1$, uniformly for all $k$-th moments.
% Let $U_T=G_T\cdots G_1$, where the gates are independent with common distribution $\nu$ on $\mathbf U(N)$.
For every integer $T\ge1$, put $L_T=\min\{\gamma T,D\}$.
Then, with an absolute constant $c>0$,
\begin{align}
  &\sup_{V\in\mathbf U(N)}\prob*{d_2(U_T,V)\le1}
  \le \exp(-\Omega(L_T)),\label{eq:general-gap-small-ball-appendix}\\
  &\prob*{\mathcal C_{1/2}(U_T)\le
  \frac{cL_T}{\log n+\log(1+L_T)}}
  \le \exp(-\Omega(L_T)).\label{eq:general-gap-complexity-appendix}
\end{align}
\end{lemma}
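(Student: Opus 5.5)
The plan is the standard moment method: turn the uniform spectral gap into an approximate $k$-design with $k\approx L_T$, use high moments of $\tau(V^\dagger U)$ to get the small-ball bound, and finish with the same $\epsilon$-net argument as in \cref{cor:local-complexity-growth}.

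\textbf{Step 1 (moments of the overlap).} Fix $V$ and set $Z=\tau(V^\dagger U_T)$. For an integer $k$, write $\abs{Z}^{2k}=N^{-2k}\,\tr\bigl[(V^{\dagger})^{\otimes k}\otimes V^{\otimes k}\,U_T^{\otimes k}\otimes\bar U_T^{\otimes k}\bigr]$. This is linear in the $k$-th moment operator $\mathbb E[U_T^{\otimes k}\otimes\bar U_T^{\otimes k}]$. The uniform gap gives
\begin{equation}
  \norm{\mathbb E[U_T^{\otimes k}\otimes\bar U_T^{\otimes k}]-\mathbb E_{\mathrm{Haar}}[U^{\otimes k}\otimes\bar U^{\otimes k}]}_{\mathrm{op}}\le(1-\gamma)^T\le e^{-\gamma T}.
\end{equation}
Pairing against the rank-one operator built from $V$ costs a dimension factor: the relevant vector has norm $N^{k}$ in the unnormalized trace, so the error is at most $N^{-2k}\cdot N^{2k}e^{-\gamma T}$ up to the way the pairing is normalized. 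More carefully, $\abs{\mathbb E\abs{Z}^{2k}-\mathbb E_{\mathrm{Haar}}\abs{Z}^{2k}}\le N^{2k}e^{-\gamma T}$ in the worst case.

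\textbf{Step 2 (Haar moments and Markov).} For $k\le N$, the Haar moments satisfy $\mathbb E_{\mathrm{Haar}}\abs{\tau(W)}^{2k}=k!/N^{2k}$. Then $d_2(U_T,V)\le1$ implies $\abs{Z}\ge1/2$, and Markov's inequality gives
\begin{equation}
  \prob{d_2\le1}\le 4^k\bigl(k!N^{-2k}+N^{2k}e^{-\gamma T}\bigr).
\end{equation}
Choose $k=\lfloor c'\min\{\gamma T/\log N,\,N\}\rfloor$, taking $k\ge1$ only if this is positive. Then $N^{2k}e^{-\gamma T}\le e^{-\gamma T/2}$, and $4^kk!N^{-2k}\le(4k/N^2)^k\le e^{-\Omega(k)}$.

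\textbf{Main obstacle.} This gives only $\exp(-\Omega(\gamma T/n))$, which loses a factor of $n$ compared with the claimed $\exp(-\Omega(L_T))$. To avoid the loss I would not use the crude $N^{2k}$ pairing. Instead I would use the additive-design bound of \cite{BHH2016}, reading the gap as controlling the relevant (normalized) moment functional with error $e^{-\gamma T}$ up to a $k$-dependent factor like $k!$ rather than $N^{2k}$. Alternatively I would bound $\mathbb E\abs{Z}^{2k}$ directly via the operator $\mathcal M_k$ restricted to the span of the relevant vectors. If neither route removes the $\log N$ cleanly, I would accept $L_T$ up to that factor, since the appendix's use in the introduction already carries an extra $n$. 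The choice $k\lesssim D$ also explains the cap $\min\{\gamma T,D\}$: past $k\sim N^2$ the Haar term no longer decays, so $L_T$ saturates at $D$.

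\textbf{Step 3 (complexity).} With the small-ball bound \eqref{eq:general-gap-small-ball-appendix} uniform in $V$, take the net $\mathcal V_r$ from \eqref{eq:local-circuit-net}, with $\log\abs{\mathcal V_r}\le Cr(\log n+\log r)$, and apply a union bound:
\begin{equation}
  \prob{\mathcal C_{1/2}(U_T)\le r}\le\abs{\mathcal V_r}\,e^{-\Omega(L_T)}.
\end{equation}
Choosing $r=cL_T/(\log n+\log(1+L_T))$ with $c$ small makes $\log\abs{\mathcal V_r}$ at most half the exponent, which yields \eqref{eq:general-gap-complexity-appendix}.
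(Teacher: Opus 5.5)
There is a genuine gap, and it is self-inflicted in Step 1. Your first estimate, $N^{-2k}\cdot N^{2k}e^{-\gamma T}$, was the right one. The ``more careful'' bound $N^{2k}e^{-\gamma T}$ drops the $N^{-2k}$ normalization of $\tau$. Write $\rho_k(U)=U^{\otimes k}\otimes\bar U^{\otimes k}$, and let $M_k,P_k$ be the one-step and Haar moment operators. H\"older's inequality then gives
\begin{equation}
  \abs{\mathbb E\abs{Z}^{2k}-\mathbb E_{\mathrm{Haar}}\abs{Z}^{2k}}
  =N^{-2k}\abs{\Tr\bigl(\rho_k(V^\dagger)(M_k^T-P_k)\bigr)}
  \le N^{-2k}\norm{\rho_k(V^\dagger)}_1\norm{M_k^T-P_k}_{\mathrm{op}}
  \le e^{-\gamma T},
\end{equation}
since $\rho_k(V^\dagger)$ is a unitary on an $N^{2k}$-dimensional space and so has trace norm exactly $N^{2k}$. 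This is precisely the paper's argument. Hence $\mathbb E\abs{Z}^{2k}\le k!/N^{2k}+e^{-\gamma T}$. Taking $k=\alpha L_T$ with $\alpha$ small, Markov's inequality gives $\exp(-\Omega(L_T))$ directly. There is no $\log N$ loss to remove, and the detours through additive designs or restricted moment operators are unnecessary.

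As written, however, your Step 2 only establishes $\exp(-\Omega(\min\{\gamma T/n,N\}))$. Your fallback of accepting $L_T$ ``up to that factor'' means the proposal does not prove the lemma as stated. For instance, it would give \cref{thm:gap-complexity-appendix} only with $L_T=\min\{T/n^2,N\}$.

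A second, smaller loss comes from restricting to $k\le N$ so you can use the exact Haar moment. This caps the exponent at $N$ rather than $D$. You only need the inequality $\mathbb E_{\mathrm{Haar}}\abs{\tau(W)}^{2k}=N^{-2k}\Tr P_k\le k!/N^{2k}$. It holds for every $k$, because $\Tr P_k$ is the dimension of the commutant of $\{U^{\otimes k}\}$, which is at most $k!$ by Schur--Weyl duality. Then $4^kk!/N^{2k}\le(4k/N^2)^k$ decays for all $k\le\alpha D$, and this is what produces the cap $\min\{\gamma T,D\}$. Your own closing remark anticipates this, but your Step 2 does not implement it.

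Step 3 (the net plus a union bound with $r=cL_T/(\log n+\log(1+L_T))$) matches the paper. It goes through once the correct small-ball bound is in place.
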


\begin{proof}

Denote $\rho_k(U)=U^{\otimes k}\otimes\overline U^{\otimes k}\in \mathrm{End}(\mathcal H^{\otimes k}\otimes \mathcal{\bar H}^{\otimes k})$ where $\mathcal H=(\mathbb C^2)^{\otimes n}$.
$M_k=\EX{\rho_k(G)}$ where the expectation is over the one-step random walk, and $P_k=\EX[\mathrm{Haar}]{\rho_k(U)}$.
The assumption says $\norm{M_k-P_k}_{\mathrm{op}}\le 1-\gamma \leq e^{-\gamma}$ for all $k\ge1$.
Iterating the random walk for $T$ steps, we have
$\norm{M_k^T-P_k}_{\mathrm{op}}\le e^{-\gamma T}$.

Now fix an arbitrary point $V\in \mathbf U(N)$. Let us consider the overlap $\tau(V^\dagger U)$.
% Moreover, $\rho_k(V^\dagger)P_k=P_k$, since $P_k$ projects onto the invariant subspace.
We have:
\begin{align}
  \EX{\abs{\tau(V^\dagger U_T)}^{2k}}
  &=N^{-2k}\Tr_k\!\left(\rho_k(V^\dagger)M_k^T\right)\\
  &=N^{-2k}\Tr_k P_k
    +N^{-2k}\Tr_k\!\left(
      \rho_k(V^\dagger)(M_k^T-P_k)\right)\\
  &\le N^{-2k}\Tr_k P_k
    +\norm{M_k^T-P_k}_{\mathrm{op}}\\
  &\le \frac{k!}{N^{2k}}+e^{-\gamma T}.
\end{align}
Here $\Tr_k$ is the trace on $\operatorname{End}(\mathcal H^{\otimes k}\otimes \mathcal{\bar H}^{\otimes k})$;
the last step uses the following inequality due to the Schur-Weyl duality:
\begin{equation}
  \tr P_k =\dim\operatorname{End}_{\mathbf{U}(N)}(\mathcal H^{\otimes k}) \leq k! \,.
\end{equation}

As $d_2(U_T,V)\le1$ implies $\abs{\tau(V^\dagger U_T)}\ge1/2$, Markov's inequality yields:
\begin{equation}\label{eq:comparison-gap-small-ball}
  \sup_V\prob*{d_2(U_T,V)\le1}
  \le 2^{2k}
  \left(\frac{k!}{N^{2k}}+e^{-\gamma T}\right).
\end{equation}
Take $k=\alpha L_T$ for a sufficiently small absolute $\alpha>0$l, then both terms in the r.h.s. are $\exp(-\Omega(L_T))$.

The circuit net from \cref{eq:local-circuit-net} has logarithmic size at most $Cr[\log n+\log(1+r)]$.
A union bound with \cref{eq:general-gap-small-ball-appendix} proves \cref{eq:general-gap-complexity-appendix}.
\end{proof}

  It is shown in \cite{BaerHaah2026} that the all-to-all 2-local random quantum circuits have a uniform spectral gap $\gamma=\Omega(1/n)$. So we have:
\begin{corollary}\label{thm:gap-complexity-appendix}
For all-to-all 2-local random quantum circuits, with $T$ counting the total number of gates and $L_T=\min\{T/n,D\}$, there exists an absolute constant $c>0$ such that
\begin{align}
  &\sup_{V\in\mathbf U(N)}\prob*{d_2(U_T,V)\le1}
  \le \exp(-\Omega(L_T)),\label{eq:gap-optimized-small-ball-appendix}\\
  &\prob*{\mathcal C_{1/2}(U_T)\le
  \frac{cL_T}{\log n+\log(1+L_T)}}
  \le \exp(-\Omega(L_T)).\label{eq:gap-all-time-complexity-appendix}
\end{align}
\end{corollary}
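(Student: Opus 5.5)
This corollary should follow by plugging the spectral-gap result of \cite{BaerHaah2026} into \cref{prop:general-gap-complexity-appendix}. The only work is to match the normalization of the gap with the hypothesis of that lemma.

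The plan has three steps.

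First, recall the input. For the one-step all-to-all 2-local walk (random pair $e$, Haar gate on $\mathbf U(4)_e$), \cite{BaerHaah2026} shows that the $k$-th moment operator $M_k=\EX{\rho_k(G)}$ satisfies $\norm{M_k-P_k}_{\mathrm{op}}\le 1-\gamma$ with $\gamma\ge c_1/n$. This holds uniformly in $k$, for all $k$ up to at least the range used in the proof of \cref{prop:general-gap-complexity-appendix}, namely $k=\alpha L_T\le \alpha D$. I will state this as the imported hypothesis and check that the range of $k$ it covers contains $k\le\alpha D$ for a small absolute $\alpha$. If the cited result is only stated for $k$ up to some power of $N$, the lemma's $k$ can be capped at that value; that only changes constants in $L_T$, since $D=N^2-1$.

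Second, apply \cref{prop:general-gap-complexity-appendix}. It gives the small-ball bound $\exp(-\Omega(\min\{\gamma T,D\}))$ and the complexity bound with $L'_T=\min\{\gamma T,D\}$. Since $\gamma\ge c_1/n$ with $c_1\le1$, we have
\[
L'_T\ge c_1\min\{T/n,D\}=c_1L_T .
\]
So $\exp(-\Omega(L'_T))\le\exp(-\Omega(L_T))$, which gives \cref{eq:gap-optimized-small-ball-appendix}.

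Third, handle the complexity bound \cref{eq:gap-all-time-complexity-appendix}. The function $x\mapsto x/(\log n+\log(1+x))$ is increasing for $x\ge0$, so replacing $L'_T$ by the smaller $c_1L_T$ in the threshold only shrinks the event. Concretely, I will rerun the union bound: take $r=\lfloor cL_T/(\log n+\log(1+L_T))\rfloor$ and the net $\mathcal V_r$ from \cref{eq:local-circuit-net}, whose size satisfies $\log\abs{\mathcal V_r}\le Cr(\log n+\log(1+r))\le C'cL_T$. Combined with the $\exp(-\Omega(L_T))$ small-ball bound, the union bound stays $\exp(-\Omega(L_T))$ once $c$ is small. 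If $\gamma T$ happens to exceed $D$, the minimum saturates at $D$ on both sides, and the argument is unchanged.

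The only real obstacle is bookkeeping. I need to confirm that the cited gap is stated in the form $\norm{M_k-P_k}_{\mathrm{op}}\le 1-\gamma$, per gate rather than per layer, and for a large enough range of $k$. If it is instead stated per layer of $\Theta(n)$ gates, I would convert it by submultiplicativity: the per-gate bound follows up to constants, and the resulting $\gamma T=\Omega(T/n)$ is unchanged.
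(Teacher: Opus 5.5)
Your proposal is correct and follows the paper's argument: the corollary is \cref{prop:general-gap-complexity-appendix} applied with the uniform gap $\gamma=\Omega(1/n)$ from \cite{BaerHaah2026}, using $\min\{\gamma T,D\}\ge c_1\min\{T/n,D\}$ together with either monotonicity of the complexity threshold or a rerun of the net union bound. One caveat concerns your fallback remark: capping $k$ at $N^{a}$ with $a<2$ would replace $D$ by $N^{a}$ in $L_T$, which is not a constant-factor change, so this route genuinely needs the gap uniformly for $k$ up to order $D$, as the paper assumes.
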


\bibliographystyle{alpha}
\bibliography{ref}

\end{document}